\newtheorem{prop}{Proposition}
\newtheorem{theorem}{Theorem}
\newtheorem{definition}{Definition}
\newtheorem{remark}{Remark}
\newtheorem{corollary}{Corollary}
\renewcommand\vec{\boldsymbol}
\newcommand{\norm}[1]{\left\lVert#1\right\rVert}
\newcommand{\scal}[2]{\langle #1,#2\rangle}
\newcommand*\colvec[1]{
        \global\colveccount#1
        \begin{pmatrix}
        \colvecnext
}
\def\colvecnext#1{
        #1
        \global\advance\colveccount-1
        \ifnum\colveccount>0
                \\
                \expandafter\colvecnext
        \else
                \end{pmatrix}
        \fi
}
\begin{document}

\title{On transformations and graphic methods of algebraically 3 dimensional force, velocity and displacement systems}
\author{Tam\'as Baranyai}
\maketitle

\abstract

In engineering practice one often encounters planar problems, where the corresponding vector space of forces, velocities or (infinitesimal) displacements is three dimensional. This paper shows how these spaces can be factorized, such that the arising equivalence classes correspond to points and lines of action of the forces / velocities / displacements in the (projective) plane. It is shown how the study of projective transformations and dualities of planar mechanical systems is closely related to the study of linear maps of these spaces. A few past results are analysed and sometimes extended to show the power of this description.

\section{Introduction}
Given an object in three dimensional space, both all the possible forces it can be subjected to, and all the possible velocities it can have are describable with 6 dimensional vector-spaces. If one considers a linear approximation of the movements of the object the displacement systems are also describable with a 6 dimensional vector-space. In the past century it was shown \cite{ball1900treatise,klein1939elementary} how these 6 numbers can be considered projective homogeneous line coordinates, such that they represent the line of action of the force and axis of rotation of the angular velocity or infinitesimal rotation. This description, called screw theory has become well known in robotics \cite{davidson2004robots} or line geometry \cite{pottmann2001computational}, but not so much in civil engineer circles; in spite of rigidity theorists using it to prove the projectie invariance of the rigidity of bar-joint frameworks and plate (panel, sheet) structures \cite{crapo1982statics}. A formulation for planar motions \cite{whiteley1978introductionI} and forces \cite{whiteley1978introductionII}, similar to the one presented here have been also developed, but it does not appear to have taken root. 

A serious advantage of this connection between objects of projective spaces and forces, velocities and displacements is the ease their transformations can be described and understood. The use of such transformations to engineering purposes is also an old idea \cite{rankine1857transformations}, but this area appears to be of current reserch interest \cite{huerta2010designing,fivet2016projective}. These works however say very little about the changes in the magnitudes of forces, meaning they are of limited use when one intends to optimize structures for mechanical performance. The description provided here directly gives how forces of a structure change when the geometry of the structure is subjected to a projective transformation, and is useful for optimization in this sense.\\

In this paper we restrict ourselves to mechanical systems where the forces, velocites and displacements can be described with three dimensional vector-spaces. The objects forming such mechanical systems will always lie in a plane, thus allowing the connection with the projective plane. The relevant forces can be either coplanar ($\mathcal{F}_p$) or orthogonal to the plane ($\mathcal{F}_c$), which is concurrent in the projective sense: the lines of action of the forces meet at infinity. (In certain cases the orthogonality is not a criterion and sometimes it is more convenient to consider different concurrent systems, see \cite{baranyai2018duality}.) The same can be said for relevant velocities ($\mathcal{E}$) and infinitesimal displacements ($\mathcal{D}$). When considering a mechanical system these physical quantities have to complement each other, meaning we have two meaningful possibilities. One is the usual \emph{planar system} $(\mathcal{D}_c,\mathcal{E}_c,\mathcal{F}_p)$: bodies being subjected to coplanar forces and they are moving in the same plane. The  movement is conveniently given by rotation vectors perpendicular to the plane. In case of the other, \emph{complementary planar system} $(\mathcal{D}_p,\mathcal{E}_p,\mathcal{F}_c)$ the bodies are subjected to forces perpendicular to a given plane, while the motions are given by rotation and displacement vectors lying in the plane. This is not merely a theoretical setting, since in civil engineering, one often has to create planar space covers. In fact one of the first works \cite{TARNAI1989duality} investigating planar dualities is about a typical planar space cover, namely grillages (a planar networks of beams).\\

A clear cut linear algebraic formulation is given of the correspondence between forces / velocities / displacements and their points / lines of attack in the projective plane. Using this formulation it is shown how one can factorize the linear transformations of the three dimensional vector-spaces of these physical quantities into equivalence classes that correspond to projective transformations and dualities of the plane. With this, given any linear map acting on the space of said physical quantities we can instantly see its effect on the geometry; and given a geometric transformation we can instantly see the set of linear maps on the physical quantities that could correspond to it. The power of this description is supported by the following additional results: 

A known combinatorical result \cite{whiteley1991weavings} stating the existence of a spherical polyhedron corresponding to a self stress of a grillage, whose projection is the dual grid of the grillage is analytically supplemented. 

The fact that rigidity is a projective invariant is proven for general planar force systems and structures, not just bar and joint frameworks and grillages as it appears to have been so far.

It is shown that the moment functional (stress function) of graphic statics can be dualized to describe the appropriate velocities of certain mechanisms. A consequence of this is that a three dimensional diagram depicting the velocity state of the mechanism can be created using this velocity functional and a three dimensional projective duality, analogously to Maxwell's construction \cite{maxwell1864,maxwell1870}.

A theorem dual to the Aronhold-Kennedy theorem of kinematics is presented, which appears to be stronger than the one present in current literature \cite{shai2006study}, as it does not require concepts additional to the basic concept of the force.

It is shown how under certain conditions an accurate moment diagram can be created using the reference line of the structure and the curve of the resultant forces (thrust line).

\section{Preliminaries}
As the main result of this work stems from pairing geometrical and physical objects, we start with the mathematical definitions. Due to the nature of the mechanical properties, we restrict ourselves to the real projective plane, which we will denote with $PG(2)$.  
\subsection{Elements of the projective plane}
Consider $\mathbb{R}^3$, and the following equivalence relation
\begin{align}
p\sim q \iff q \in\{\lambda p  \ \vert \ \lambda \in \mathbb{R}\setminus \{0\} \}\label{eq:equivalence}
\end{align}
($p,q\in \mathbb{R}^3$).
The set of all such $q$ is called the equivalence class of $p$ (denoted with $p_{\sim}$), while the vectors themselves are called representants (or representatives) of the equivalence class. We can consider each equivalence class as a point in $PG(2)$. How one embeds the Euclidean space into the projective one carries certain freedom in terms of coordinates. In this paper point $\vec{p}\in \mathbb{R}^{2}$ is mapped to the equivalence class $(\vec{p},1)_{\sim} $. Correspondingly, points at infinity in direction $\vec{u}$ will be represented with $(\vec{u},0)_{\sim}$.  

One way of looking at this description is that we have embedded the $x,y$ plane into $\mathbb{R}^3$ with $z=1$. This image is useful to understand what lines are in this setting: planes in $\mathbb{R}^3$, passing through the origin and intersecting the $z=1$ plane in a line. Conveniently, these planes can be represented by any of their normal vectors, which differ in length but not direction. As such, lines of the projective plane can also be thought of as equivalence classes, such that line $l_{\sim}$ contains point $p_\sim$, if and only if $\scal{l}{p}=0$ holds (the usual scalar product). It can be seen, how the choice of representants does not influence this relation.

As points are identified with $1D$ subspaces while lines with $2D$ subspaces of $\mathbb{R}^3$, we have the following: three points are collinear if their representing vectors are linearly dependent. Three lines are concurrent if their representants are linearly dependent. Again, the choice of representants does not effect this relation.

We will consider two types of transformations of $PG(2)$, projective transformations and dualities. They are:

\begin{definition}[Projective transformation]
A projective transformation (collineation) is a $PG(2) \rightarrow PG(2)$ map, mapping points to points and lines to lines, such that incidences are preserved.
\end{definition}

\begin{definition}[Duality]
A duality is a $PG(2) \rightarrow PG(2)$ map, mapping points to lines and lines to points, such that incidences are preserved.
\end{definition}

In order to describe these transformations, let $\mathcal{M}_{3 \times 3}$ denote the set of \emph{invertible} $3 \times 3 $  matrices over $\mathbb{R}$, and let us factorize this set into equivalence classes similarly:
\begin{align}
A\sim B \iff A \in\{\lambda B  \ \vert \ \lambda \in \mathbb{R}\setminus \{0\}\label{eq:equivalence2}
\end{align} 
($A,B\in \mathcal{M}_{3 \times 3}$). Denoting the set of all arising equivalence classes with $\mathcal{M}_{3 \times 3}/\sim$, we can have the following two theorems helping us.

\begin{theorem}[\cite{pottmann2001computational}]\label{thm:projectivity}
There is a bijection between equivalence classes of $\mathcal{M}_{3 \times 3}/\sim$  and the projective transformations of $PG(2)$.
\end{theorem}

\begin{theorem}[\cite{pottmann2001computational}]\label{thm:duality_matek}
There is a bijection between equivalence classes of $\mathcal{M}_{3 \times 3}/\sim$   and the dualities of $PG(2)$.
\end{theorem}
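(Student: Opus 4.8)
The plan is to reduce the statement to Theorem~\ref{thm:projectivity} by setting up an explicit bijection between the dualities and the collineations of $PG(2)$, the bridge being one fixed duality. Define $\delta_0\colon PG(2)\to PG(2)$ to send the point $p_\sim$ to the line $p_\sim$ and the line $l_\sim$ to the point $l_\sim$, using the same coordinate vectors in each case. This is well defined on equivalence classes, it sends points to lines and lines to points, and since $\scal{l}{p}=\scal{p}{l}$ it preserves incidence; hence $\delta_0$ is a duality. Moreover $\delta_0\circ\delta_0$ is the identity, so $\delta_0$ is its own inverse. (In the matrix language of the theorem, $\delta_0$ is the duality attached to the class of the identity matrix.)

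Next I would show that composition on the right with $\delta_0$ is a bijection between the set of dualities and the set of collineations. For a duality $\delta$ put $\Psi(\delta)=\delta\circ\delta_0$. Since $\delta_0$ maps points to lines and $\delta$ maps lines to points, while $\delta_0$ maps lines to points and $\delta$ maps points to lines, the composite $\Psi(\delta)$ maps points to points and lines to lines; and as each factor preserves incidence, so does $\Psi(\delta)$, making it a collineation. The map $\Psi$ is injective, because composing $\delta\circ\delta_0=\delta'\circ\delta_0$ once more on the right with $\delta_0$ yields $\delta=\delta'$. It is surjective, because for any collineation $\gamma$ the map $\gamma\circ\delta_0$ maps points to lines and lines to points while preserving incidence, hence is a duality, and $\Psi(\gamma\circ\delta_0)=\gamma\circ\delta_0\circ\delta_0=\gamma$. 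Thus $\Psi$ is a bijection from the dualities of $PG(2)$ onto the collineations of $PG(2)$.

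Finally I would invoke Theorem~\ref{thm:projectivity}, which supplies a bijection $\beta$ from $\mathcal{M}_{3\times 3}/\sim$ onto the collineations of $PG(2)$; then $\beta^{-1}\circ\Psi$ is a bijection from the dualities of $PG(2)$ onto $\mathcal{M}_{3\times 3}/\sim$, which is the assertion. Unwinding the construction identifies the correspondence concretely: the class $A_\sim$ corresponds to the duality sending $p_\sim\mapsto (Ap)_\sim$ (read as a line) and $l_\sim\mapsto (A^{-T}l)_\sim$ (read as a point); replacing $A$ by $\lambda A$ scales $Ap$ by $\lambda$ and $A^{-T}l$ by $\lambda^{-1}$, leaving both equivalence classes unchanged, so this descends to $\mathcal{M}_{3\times 3}/\sim$ as required.

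The substantive content — that a collineation determines its matrix up to a nonzero scalar — is packaged inside Theorem~\ref{thm:projectivity}, so given that theorem the remaining work is only bookkeeping: checking that $\delta_0$ is a well-defined involutory duality and tracking the point/line types through the compositions $\delta\circ\delta_0$ and $\gamma\circ\delta_0$. If one instead wanted a self-contained proof, the one genuine obstacle would be the injectivity of $A_\sim\mapsto\delta_A$, i.e. showing that two non-proportional invertible matrices induce different dualities; this is dispatched by the standard argument of evaluating on a projective frame $e_1,e_2,e_3,e_1+e_2+e_3$, which is precisely the argument underlying Theorem~\ref{thm:projectivity}.
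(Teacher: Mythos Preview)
The paper does not supply its own proof of this theorem: it is stated as a quoted result from \cite{pottmann2001computational}, with no argument given. So there is nothing in the paper to compare your proof against.

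That said, your argument is sound. Fixing the standard duality $\delta_0$ and observing that $\delta\mapsto\delta\circ\delta_0$ is a bijection from dualities onto collineations, then invoking Theorem~\ref{thm:projectivity}, is the clean way to deduce this statement from the previous one. The checks you list (well-definedness on equivalence classes, incidence preservation via symmetry of the scalar product, involutivity of $\delta_0$, type-tracking through the composition) are exactly what is needed and are all correct. Your final paragraph is also accurate about where the real content lies.

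One cosmetic remark: the paper works with row vectors acting on the right ($p\mapsto pA$, $l\mapsto lA^{-T}$), whereas your explicit formula at the end is written with column vectors on the left. If this is to be spliced into the paper you should rewrite the concrete correspondence as $p_\sim\mapsto (pA)_\sim$ read as a line and $l_\sim\mapsto (lA^{-T})_\sim$ read as a point, to match the surrounding conventions; the incidence check $\scal{pA}{lA^{-T}}=\scal{p}{l}$ then goes through exactly as in the paper's discussion after Theorem~\ref{thm:duality_matek}.
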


In practice we will have to take into account, whether we are transforming points or lines.  If (in either of the cases) points are transformed as $p_{\sim} \mapsto p_{\sim} P_{\sim}$ then lines are transformed as $l_{\sim} \mapsto l_{\sim} P^{-T}_{\sim}$. It can be seen that how this preserves incidence, as $\scal{p P}{l P^{-T}}=\scal{p PP^{-1}}{l}=\scal{p}{l}$. It is also apparent how the choice of representants does note effect this relation.

\subsection{Planar system}
This system consists of coplanar bodies moving in the plane they lie in and being subjected to coplanar forces. One way of describing the movements is with rotations around axes perpendicular to the plane of interest.

Let us have a coordinate system with axes labelled $x,y,z$ such that the plane of interest is the $x,y$ plane. The velocity state of any body moving in the plane can be described with the triplet $(v_x,v_y,\omega_z)$ where $v_x$ and $v_y$ describe the translational components of the motion and $\omega_z$ the rotational component around the origin. To each motion belongs a single point $r$ in the plane, where the velocity vector is zero: the body rotates around this point. It's coordinates can be calculated as $r_x=-v_y /\omega_z$ and $r_y=v_x/\omega_z$, provided $\omega_z\neq 0$. From this it is apparent that we can consider the triplet 
\begin{align}
e:=(-v_y,v_x,\omega_z)\in \mathcal{E}_c
\end{align}
both a 3 dimensional vector giving the velocity state of a planar body and a representant of the equivalence class $e_{\sim}$ that is corresponding to the point in $PG(2)$ around which the body is rotating. We can also see that rotation around an ideal point (at infinity) means pure translation. The equivalence class consists of velocities that are scalar multiples of each other, corresponding to rotations around the same point, with different speeds.\\

In numerous fields of mechanics the assumption of infinitesimal rotations is common practice. Such displacements can also be identified with projective coordinates, in this displacement system with point coordinates. The description proposed is

\begin{align}
d:=(-d_y,d_x,\phi_z) \in\mathcal{D}_c.
\end{align}
The motion of the body is described with translations $d_x$ and $d_y$ and rotation around the origin with angle $\phi_z$ (one can check how the assumption of infinitesimal rotations, i.e. $\text{sin}(\phi)\approx \phi$ and $\text{cos}(\phi)\approx 1$ makes the order of these transformation irrelevant). Similarly to the angular velocities, $d$ is projective homogeneous coordinate description of the point around which the body ends up rotating.\\

We can describe any force acting in the plane with the triplet $(F_{x},F_{y},M_z)$, where $F_{x}$ and $F_{y}$ are the projections of the force to the coordinate axes and $M_z$ is the moment of the force with respect to an axis perpendicular to the plane passing through the origin (with respect to the origin for short). 

Now let $f \in \mathcal{F}_p$ be defined as
\begin{align}
f:=(-F_{y},F_{x},M_z)\in \mathcal{F}_p
\end{align}  
and let us note, that the the scalar product $\scal{(q_x,q_y,1)}{f_i}=-q_xF_{y}+q_yF_{x}+M_z$ is the moment of the force with respect to the point $(q_x,q_y)$, where  $(q_x,q_y,1)$ can be thought of as carefully chosen representant from the equivalence class $(q_x,q_y,1)_{\sim}$. This shows how the line of action of $f$ is precisely the equivalence class $f_{\sim}$.

\subsection{Complementary planar system}

In this system bodies lying in the $x,y$ plane are subjected to forces pointing in the $z$ direction. We can describe them with their $z$ directional component $F_z$ and their moments $M_x$  and $M_y$ with respect to the coordinate axes. A calculation similar to above shows that if we order them as

\begin{align}
f:=(-M_y,M_x,F_z)\in \mathcal{F}_c
\end{align}
the triplet not only uniquely represents each force, but can be considered homogeneous coordinates for the point in the plane where the force is acting.

Also similarly to what has been discussed above, each instantaneous velocity causing $z$ directional motion of the points lying in the $x,y$ plane can be represented by the triplet

\begin{align}
e:=(-\omega_y,\omega_x,v_z)\in \mathcal{E}_p
\end{align}

where $\omega_x$ and $\omega_y$ are components of the angular velocity vector, and $v_z$ is the velocity if the origin in the $z$ direction. The triplet 

\begin{align}
d:=(-\phi_y,\phi_x,d_z)\in \mathcal{D}_p
\end{align}

is defined similarly, but represents displacements. Both $d \in \mathcal{D}_p$ and $e \in \mathcal{E}_p$ can be considered as projective line coordinates, representing the lines around the body is rotating.

\subsection{Inner products}\label{sec:inner}
As one would expect, the properties of the inner products of $6D$ screw theory hold in this more compact form. they are:
\begin{itemize}
\item $\scal{f_c}{d_p}$ is the work of force $f_c$ on displacement $d_p$
\item $\scal{f_p}{d_c}$ is the work of force $f_p$ on displacement $d_c$
\item $\scal{f_c}{e_p}$ is the power of force $f_c$ given velocity $e_p$
\item $\scal{f_p}{e_c}$ is the power of force $f_p$ given velocity $e_c$
\end{itemize}
If we restrict ourselves to appropriately chosen point and line coordinates, we can conveniently express equilibrium and compatibility equations this way. The appropriately chosen point coordinates $p=(p_1,p_2,p_3)$ have to satisfy $p=(p_1,p_2,1)$ or $p=(p_1,p_2,0)$. The appropriately chosen line coordinates $l=(l_1,l_2,l_3)$ have to satisfy $l=(0,0,1)$ or $l_1^2+l_2^2=1$.  In exchange, we have: 
\begin{itemize}
\item $\scal{p}{f_p}$ is the moment of a planar force with respect to point $p$
\item $\scal{l}{f_c}$ is the moment of an orthogonal force with respect to an axis along line $l$ in the plane (the direction of the axis is given similarly to the way forces are identified with lines).
\item $\scal{l}{e_c}$ is the $l$ directional velocity of any point of a body lying on line $l$, if the body is moving with $e_c$. 
\item $\scal{p}{e_p}$ is the $z$ directional velocity of point $p$ of a body moving with $e_p$. 
\item $\scal{l}{d_c}$ is the $l$ directional displacement of any point of a body lying on line $l$, if the body is displaced with $d_c$. 
\item $\scal{p}{d_p}$ is the $z$ directional displacement of point $p$ of a body displaced with $d_p$.
\end{itemize}

One of these will come up exceptionally often in the following, namely:

\begin{definition}[moment functional]
The function
\begin{align}
m(p):=\scal{p}{f_p}.
\end{align}
\end{definition}

The dual nature of the force (both a vector and a linear functional) explains \cite{baranyai2019analytical} many aspects of the projective duality-based constructions of graphic statics originating from the works of Maxwell.

\section{Transformations}
We will now take a look at transformations of these 3 dimensional vector-spaces and see how they are connected to the geometrical changes of the structures they belong to. This will be done considering $\mathcal{E}$ and $\mathcal{F}$ only, at any point one may substitute $\mathcal{D}$ in place of $\mathcal{E}$ if one wishes to consider infinitesimal displacements. We will consider two types of transformations: linear maps and congruences.

Linear transformations uniquely correspond to elements of the set $\mathcal{M}_{3 \times 3}$ which we already factorized into equivalence classes according to the projective equivalence relation \eqref{eq:equivalence2}, we extend this factorization to transformations of these physical quantities naturally. In other words, denoting  $\mathcal{U} \rightarrow \mathcal{V}$ linear maps (where $\mathcal{U},\mathcal{V}\in \{\mathcal{D}_c,\mathcal{E}_c,\mathcal{F}_p,\mathcal{D}_p,\mathcal{E}_p,\mathcal{F}_c) \}$) with $L^{u,v}$; to each equivalence class $L^{u,v}_{\sim}$ corresponds a single equivalence class $A_ {\sim}\in \mathcal{M}_{3 \times 3}/\sim$, such that for any $L^{u,v}\in L^{u,v}_{\sim}$ there exists a single $A \in A_{\sim}$ satisfying $L^{u,v}:  \mathcal{U} \ni x \mapsto xA \in \mathcal{V}$. This is unique in the other direction as well.

It is traditional in projective geometry to use different weights on homogeneous coordinates, which do not change the point/line they represent but may influence which point/line the sum (as vectors) of a set of elements represents. Such weight choice is called congruence and we will also extend them to mechanical properties. While linear maps acted on elements of a force or displacement systems the same way, congruences act differently on each element.
\begin{definition}[Congruence of force systems]
Congruence $\Psi$ acts on force system $\{f_i\}$ as\begin{align}
\{f_i\}\mapsto \{\psi_i f_i\} \quad \vert \ \psi_i \in \mathbb{R}\setminus \{0\}.
\end{align}
\end{definition}
We can define congruent velocity and displacement systems similarly. Two things are noteworthy: Congruences do not change the line/point of attack of a physical quantity, as they preserve equivalence classes; and congruences and linear transformations of these systems commute. 

For the sake of brevity, from here on we will neglect the equivalence class sign when talking about points, lines and transformations of $PG(2)$; unless there is explicit reason to show the distinction.

\begin{remark}
In spite of the presented description covering only a finite number of concentrated forces, the statements are valid for the case of distributed forces and thus for statics of a continuum. To see this, one only has to replace the finite sums with the appropriate integrals.
\end{remark}

While in certain cases like trusses and grillages the geometry of the structure is related to the geometry of the force or velocity systems, in other ones the shape of the bodies under consideration is not necessary relevant. In any case, if body $\mathcal{B}$ was subjected to forces $f_i$, then after a transformation there has to exist a body $\mathcal{B}'$ subjected to the transformed forces $f'_i$. An analogous statement can be made about bodies possessing the appropriate velocities. An illustration is presented in Figures \ref{fig:i2} and \ref{fig:i1}. Figure \ref{fig:i2} contains a three jointed structure, comprised of two bodies. The dual structure in Figure \ref{fig:i1} is also comprised of two bodies, but connected with hinges along 3 coplanar lines. The incidence constraints in both cases are $\scal{f_a}{A}=0,$ $\scal{f_b}{B}=0,$ $\scal{f_c}{C}=0$. In the primal example they constrain forces to points $A,B$ and $C$, while in the dual example forces are constrained to lines $A,B$ and $C$. Similarly, the equilibrium of 3 forces is a concurrency condition in the primal example and a collinearity condition in the dual example. 

We should also note, that in certain cases the geometry of the force or velocity system does not uniquely (up to scaling) determine the system. For instance one can imagine a simply supported beam loaded with more than one vertical force, or a truss with statical indeterminacy of degree more than one. A previous work \cite{whiteley1978introductionII} used congruences acting on projective points (joints of the truss) and derived the forces from them through Cayley-algebra. The description did not contain all transformations of static equilibrium. We introduced the congruences of force systems to fix this, as they are precisely the operations leaving the geometry of the force system invariant. They do not preserve static equilibrium in general though. Noting that this behaviour is tied to the structure and its loads, we define
\begin{definition}[Equilibrium preserving congruences of a statics problem]
Congruences of the force system of the problem, such that static equilibrium of all (sub)-bodies of the problem is preserved.
\end{definition}
One can have kinematically indeterminate systems as well, calling for 
\begin{definition}[Compatibility preserving congruences of a kinematics problem]
Congruences of the velocity system of the problem, such that compatibility of all (sub)-bodies of the problem is preserved.
\end{definition}
Every problem has one pair of such congruences, corresponding to the uniform scaling of the forces and velocities involved, which we will call trivial congruence. We will not pursue the identification of equilibrium and compatibility preserving congruences, as this is a property of the problems, while this work focuses on transformations.

\subsection{Projective transformations}
Projective geometry in itself gives operations on lines and points. If we want to relate mechanical problems to each other, we have to extend these operations to mechanical quantities.

\begin{definition}[projective transformation of a statics / kinematics problem]
A transformation of a structure into an other one, such that all the points (and lines as such) of the structure are transformed according to the rules of projective transformations. Points and lines of attack of forces / velocities are also transformed according to the same projective transformation. The image structure is in static equilibrium / is compatible if and only if the original structure was. 
\end{definition}
Note, how the definition says nothing about the magnitudes of the mechanical quantities.

The main result among such transformations is the following theorem:
\begin{theorem}\label{thm:projective}
\begin{enumerate}[label=(\roman*)]
\item[]
\item Invertible linear transformations  $\mathcal{F}_p \rightarrow \mathcal{F}_p$ and $ \mathcal{F}_c \rightarrow \mathcal{F}_c$ preserve static equilibrium. Each projective transformation of a statics problem can be described by the composition of a linear map equivalence class and an equilibrium preserving congruence of the force system; and any such composition gives a projective transformation of the statics problem.  
\item Invertible linear transformations  $\mathcal{E}_p \rightarrow \mathcal{E}_p$ and $ \mathcal{E}_c \rightarrow \mathcal{E}_c$ preserve compatibility. Each projective transformation of a kinematics problem can be described by the composition of a linear map equivalence class and a compatibility preserving congruence of the velocity system; and any such composition gives a projective transformation of the kinematics problem.
\end{enumerate}
\end{theorem}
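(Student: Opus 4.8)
\emph{Plan.} The plan is to prove (i) in three steps --- invertible linear maps preserve equilibrium; any composite of a linear‑map class and an equilibrium‑preserving congruence is a projective transformation of the statics problem; and conversely --- and then to obtain (ii) by running the same argument on the velocity spaces, substituting $\mathcal D$ for $\mathcal E$ without any change. I read ``preserves equilibrium/compatibility'' biconditionally: for linear maps and composites this is automatic, since such a map is an invertible linear map of the finite‑dimensional space $\bigoplus_i\mathcal F$ of force systems of the problem and the equilibrium systems form an affine subspace of it, so ``maps into'' upgrades to ``maps onto''; for a congruence it is part of the notion, read in the spirit of the ``if and only if'' in the definition of a projective transformation of a problem.

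\emph{Linear maps preserve equilibrium.} Equilibrium of a sub‑body is the single vector equation $\sum_i f_i=0$ in $\mathcal F_p$ (or $\mathcal F_c$), the sum running over all --- possibly internal --- forces acting on it, and this one equation simultaneously encodes force and moment balance about the origin. For invertible linear $L$ we have $\sum_i L(f_i)=L\big(\sum_i f_i\big)$, which vanishes iff $\sum_i f_i=0$; doing this for every sub‑body shows $L$ preserves equilibrium. Moreover the line of action $f_{i,\sim}$ goes to $(Lf_i)_\sim$, i.e.\ to the image of $f_{i,\sim}$ under the collineation attached to $L_\sim$ by Theorem~\ref{thm:projectivity}, and a body $\mathcal B'$ carrying the forces $Lf_i$ exists as discussed above; hence $L$, together with this induced collineation, is a projective transformation of the statics problem. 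The same computations hold verbatim for $\mathcal F_c$.

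\emph{The two directions of the decomposition (statics).} Let $A$ represent a collineation $P$ and let $\Psi=\{\psi_i\}$ be an equilibrium‑preserving congruence; the composite sends $f_i\mapsto\psi_i(f_iA)$, the order being immaterial because congruences and linear maps commute and $\Psi$ attaches $\psi_i$ to $f_iA$ exactly as to $f_i$. Lines of action transform by $f_{i,\sim}\mapsto(f_iA)_\sim=P(f_{i,\sim})$ and the structure by $P$, so the geometry is moved by a single collineation; and $\sum_i\psi_i f_iA=\big(\sum_i\psi_i f_i\big)A$ vanishes iff $\sum_i\psi_i f_i=0$ (as $A$ is invertible) iff $\sum_i f_i=0$ (as $\Psi$ preserves equilibrium), so the composite is a projective transformation of the statics problem. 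Conversely, let $T$ be such a transformation with underlying collineation $P$, fix a representative $A$ of $P$, and write $L_A\colon x\mapsto xA$; for each nonzero $f_i$, $T$ produces a force $f'_i$ whose line of action is $P(f_{i,\sim})=(f_iA)_\sim$, so $f'_i=\psi_i(f_iA)$ for a unique $\psi_i\neq0$, while zero forces go to zero forces. Thus $T=\Psi\circ L_A$ with $\Psi$ the congruence of weights $\psi_i$, and $\Psi=T\circ L_A^{-1}$ preserves equilibrium because $T$ does (by definition) and $L_A^{-1}$ does (previous step). Since replacing $A$ by $\lambda A$ merely rescales the $\psi_i$, this is precisely a decomposition into a linear‑map equivalence class and an equilibrium‑preserving congruence, which finishes (i).

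\emph{The kinematic dual, and the hard point.} Part (ii) follows the same three steps once compatibility is read off the 3‑vectors $e_i$. A joint joining bodies $a,b$ is a point $g$ in the planar system and a line $g$ in the complementary one; the hinge forces the two bodies to agree, at the joint, on every velocity quantity the inner products of \S\ref{sec:inner} can test, which reads $\scal{x}{e_a-e_b}=0$ for every element $x$ incident with $g$, i.e.\ $e_a-e_b\in g_\sim$ (a support reads $e_a\in g_\sim$, a rigid link $e_a-e_b=0$). An invertible linear $L$ satisfies $(e_a-e_b)L=e_aL-e_bL$ and carries $g_\sim$ to $(gL)_\sim$ --- the image of the joint under the collineation attached to $L_\sim$ --- so compatibility of $\{e_i\}$ with joints $\{g_j\}$ is equivalent to compatibility of $\{e_iL\}$ with $\{g_jL\}$, which is the kinematic version of the first step; the remaining two steps then copy over with ``$\sum_i f_i=0$'' replaced by the system of joint conditions and ``equilibrium‑preserving'' by ``compatibility‑preserving''. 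I expect this last paragraph to be the main obstacle: in statics $\sum_i f_i=0$ is transparently stable under linear maps, whereas for kinematics one must first identify compatibility with the collineation‑covariant conditions $e_a-e_b\in g_\sim$ --- this is exactly what forces linear maps to preserve compatibility --- and, in both halves, one must track the scaling freedom between the linear map and the congruence so that the ``equivalence class'' statement is exactly right.
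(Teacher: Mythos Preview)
Your proof is correct and follows essentially the same approach as the paper: linearity of $\sum f_i=0$ for equilibrium, the scalar compatibility conditions $\scal{l}{e_j-e_k}=0$ (your equivalent phrasing $e_a-e_b\in g_\sim$) together with the contragredient action on the constraint element for kinematics, and Theorem~\ref{thm:projectivity} to identify linear-map classes with collineations. Your converse decomposition $T=\Psi\circ L_A$ is spelled out more explicitly than in the paper, and your compatibility formulation is specialised to pin-type joints rather than the paper's single-line constraints, but neither difference is material.
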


\begin{proof}
\begin{enumerate}[label=(\roman*)]
\item[]
\item
Consider an invertible linear map $A: \mathbb{R}^3 \rightarrow \mathbb{R}^3$, and a force system $f_i \in \mathcal{F}_p$ (where $i\in \mathcal{I}$, some index-set). The force system $f_iA \in \mathcal{F}_p$ is in equilibrium if and only if $f_i$ was in equilibrium, as the two equilibrium equations are connected as: 
\begin{align}
\sum_{\mathcal{I}} (f_i A)=\sum_{\mathcal{I}} (f_i)A=0A=0.
\end{align}
The map  $A: \mathcal{F}_p \rightarrow \mathcal{F}_p$ maps lines of $PG(2)$ (lines of action of $f_i$) into lines of $PG(2)$ according to the rules of projective transformations. Furthermore, we know from Theorem \ref{thm:projectivity} that there are no more $2D$ real projective transformations than those describable by invertible matrices of this size. If the geometry uniquely (up to scaling) determines the forces, this gives a bijection between equivalence classes of $\mathcal{F}_p \rightarrow \mathcal{F}_p$ linear maps and projective transformations. If the problem has a non-trivial equilibrium preserving congruence of the force system, so does the image; for these maps commute. As such we may use the congruence to arrive at any force systems in equilibrium not attainable from the starting one through linear transformations, but sharing the geometry with one that is attainable. Also, composing congruences with projective transformations will not alter the change of the geometry due to the definition of the congruence, implying that all compositions still correspond to only a projective change in the geometry.\\

\item
Consider velocity system $e_i \in \mathcal{E}_c$ (where $i\in \mathcal{I}$, some index-set). Common compatibility equations can be cast in the form: 
\begin{align}
\scal{l}{e_j-e_k}=0 \label{eq:compatibility_c}
\end{align} 
meaning the relative velocity $e_j-e_k$ ($i,j\in \mathcal{I}$) along a line $l$ has to be zero. One such equation can represent a bar or a sliding joint, two such (independent) equations a fixed joint (lying in the intersection of the two lines) or a slider (one of the lines is the line at infinity), while three equations a fix support. Transforming the velocity system as
\begin{align}
\mathcal{E}_c \ni e \mapsto e A \in \mathcal{E}_c 
\end{align}
also induces a projective transformation of $PG(2)$, this time $A$ describing the change of points. 

Remembering that the transformation can be described on line coordinates with $A^{-T}$, the transformed compatibility equations can be written as
\begin{align}
\scal{lA^{-T}}{(e_j-e_k)A}=0.\label{eq:compatibility_p}
\end{align}  
This implies that the transformed velocity system is compatible if and only if the original velocity system is compatible. This process can be repeated similarly in case of  $\mathcal{E}_p \rightarrow \mathcal{E}_p$ linear maps.

If the structure has no non-trivial compatibility preserving congruence of the velocity system, the bijection between equivalence classes of linear maps of velocities and projective transformations of $PG(2)$ are again given by Theorem \ref{thm:projectivity}. If the structure has a non-trivial compatibility preserving congruence of the velocity system, the same can be said about it as in the statical case above.
\end{enumerate}
\end{proof}

\begin{remark}
The compatibility equations $\scal{p}{e_j-e_k}$ $(e_j,e_k \in \mathcal{E}_p$) mean that the relative velocity $e_j-e_k$ of two bodies is such, that at point $p$ of the plane the overlapping points of the bodies move together in the $z$ direction. Additional equations force bodies move together along a line, or the entire plane. 
\end{remark}
\begin{remark}
In the proof above, the strict choice of $A^{-T}$ is not needed, as another matrix of the equivalence class, say $\lambda A^{-T}$ will satisfy the exact same compatibility equations. 
\end{remark}

The spatial analogues of these results can be stated, in the framework of screw theory. The $4\times4$ transformation matrices of $PG(3)$ determine a subset of $6 \times 6$ matrices acting on line coordinates. While past form finding approaches only used the fact that projective transformations preserve equilibrium, with this description one can immediately see the changes in the magnitudes of forces as well.
\subsection{Dualities}
We can similarly extend the geometric concept of dualities to mechanical quantities. An important distinction between projective transformations and dualities is that the image of the structure is not necessary straightforward, as points are not mapped into points.
\begin{definition}[projective dualities of a statics / kinematics problem]
A transformation of mechanical quantities such that points and lines of attack of forces / velocities are transformed  into lines and points of attack, according to the rules of projective dualities. The image structure is in static equilibrium / is compatible if and only if the original structure was.  
\end{definition}

In this subsection we consider only dualities mapping forces to forces and velocities to velocities. Operations mapping forces to velocities and vice versa are described later. Our main result here is the following Theorem:

\begin{theorem}\label{thm:duality}
\begin{enumerate}[label=(\roman*)]
\item[]
\item Invertible linear transformations  $\mathcal{F}_p \leftrightarrow \mathcal{F}_c$ preserve static equilibrium.
Each projective duality of a statics problem can be described by the composition of a linear map equivalence class and an equilibrium preserving congruence of the force system; and any such composition gives a projective duality of the statics problem.
\item Invertible linear transformations  $\mathcal{E}_p \leftrightarrow \mathcal{E}_c$ preserve compatibility. Each projective duality of a kinematics problem can be described by the composition of a linear map equivalence class and a compatibility preserving congruence of the velocity system; and any such composition gives a projective duality of the kinematics problem.
\end{enumerate}
\end{theorem}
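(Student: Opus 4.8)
The plan is to mirror the proof of Theorem~\ref{thm:projective} almost verbatim, replacing Theorem~\ref{thm:projectivity} by Theorem~\ref{thm:duality_matek} and keeping careful track of which of the two three-dimensional spaces carries point coordinates and which carries line coordinates. Recall that elements of $\mathcal{F}_p$ (and of $\mathcal{E}_c$) are line (resp.\ point) coordinates, while elements of $\mathcal{F}_c$ (and of $\mathcal{E}_p$) are the opposite; an invertible linear map between such a pair of spaces therefore sends lines to points and points to lines, which is exactly what a duality does on $PG(2)$. Both maps in a statement like ``$\mathcal{F}_p\leftrightarrow\mathcal{F}_c$'' are treated the same way since $\mathcal{F}_p$ and $\mathcal{F}_c$ are both literally $\mathbb{R}^3$.

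For part (i) I would first take an invertible linear $A\colon\mathbb{R}^3\to\mathbb{R}^3$ read as $\mathcal{F}_p\to\mathcal{F}_c$ and a force system $f_i\in\mathcal{F}_p$, $i\in\mathcal{I}$. Because both spaces are $\mathbb{R}^3$, the equilibrium condition transforms as $\sum_{\mathcal{I}}(f_iA)=\big(\sum_{\mathcal{I}}f_i\big)A=0A=0$, so $\{f_iA\}$ is in equilibrium iff $\{f_i\}$ is, and the same applies to every sub-body, hence equilibrium of the whole problem is preserved. Geometrically, $A$ carries the lines of action $f_{i,\sim}$ to the points $(f_iA)_\sim$ while preserving incidence, so it realizes a duality of $PG(2)$; and by Theorem~\ref{thm:duality_matek} every duality of $PG(2)$ arises from such an $A$, unique up to a nonzero scalar. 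When the geometry of the force system determines it up to scaling, this already yields the bijection between equivalence classes of $\mathcal{F}_p\leftrightarrow\mathcal{F}_c$ linear maps and projective dualities of the statics problem; otherwise I would invoke, exactly as in the proof of Theorem~\ref{thm:projective}, that equilibrium preserving congruences of the force system commute with linear maps, so that composing such a congruence with the linear map reaches precisely those equilibrium force systems which share the dual geometry with a linear image but are not themselves linear images, and conversely that any such composition changes the geometry only by the duality.

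For part (ii) the argument is the dual bookkeeping on the compatibility equations. Reading an invertible $A$ as $\mathcal{E}_c\to\mathcal{E}_p$, an $\mathcal{E}_c$-compatibility equation $\scal{l}{e_j-e_k}=0$ is tested against a line $l$, which under the duality goes to the point $lA^{-T}$; since $\scal{lA^{-T}}{(e_j-e_k)A}=lA^{-T}A^{T}(e_j-e_k)^{T}=\scal{l}{e_j-e_k}$, compatibility is preserved, and the mirror computation handles $\mathcal{E}_p\to\mathcal{E}_c$ where the test objects are points and the equations read $\scal{p}{e_j-e_k}=0$. Theorem~\ref{thm:duality_matek} together with the commutation of compatibility preserving congruences with linear maps then closes the statement exactly as in the velocity half of Theorem~\ref{thm:projective}, including the remark that the choice of representant of $A^{-T}$ within its equivalence class is immaterial.

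The hard part will not be any calculation but the direction ``every projective duality of the problem arises this way'': one must check that the correspondence a problem-transformation induces on $PG(2)$ is genuinely a two-dimensional real projective duality (so that Theorem~\ref{thm:duality_matek} applies and returns a matrix $A$), that the linear map $f\mapsto fA$ then produces a force (resp.\ velocity) system with the correct dual geometry and the correct equilibrium (resp.\ compatibility) behaviour, and hence that whatever the true image differs from it by is a congruence which — since both systems satisfy the same equilibrium/compatibility relations — is equilibrium (resp.\ compatibility) preserving. Everything else is the substitution of $\mathcal{F}_c$/$\mathcal{E}_p$ for $\mathcal{F}_p$/$\mathcal{E}_c$ already performed in Theorem~\ref{thm:projective}.
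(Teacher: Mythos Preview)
Your proposal is correct and follows essentially the same route as the paper: mirror the proof of Theorem~\ref{thm:projective}, use linearity of $A$ to carry $\sum f_i=0$ (respectively $\scal{l}{e_j-e_k}=0$) across, invoke Theorem~\ref{thm:duality_matek} in place of Theorem~\ref{thm:projectivity} for the bijection with dualities, and handle the non-uniquely-determined case by composing with equilibrium/compatibility preserving congruences. If anything you are more explicit than the paper, which simply says ``just like in case of Theorem~\ref{thm:projective}'' for the equilibrium step and refers back to equations \eqref{eq:compatibility_c}--\eqref{eq:compatibility_p} for the compatibility step.
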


\begin{proof}
\begin{enumerate}[label=(\roman*)]
\item[]
\item
Consider force system $f_i \in \mathcal{F}_p$ and force system $f_iA \in \mathcal{F}_c$ where $A$ is invertible. Just like in case of Theorem \ref{thm:projective}, the linear map preserves statical equilibrium, the only difference to it is that lines of action of forces are mapped to points of action of forces, meaning the effect of the transformation on the geometry of the structure is a projective duality. In case of no non-trivial congruences the bijection between the equivalence classes of linear maps of forces and dualities is given by Theorem \ref{thm:duality_matek}. The effect of congruences is again similar.

\item
Similarly, consider velocity systems $e_i \in \mathcal{E}_c$ and force system $e_iA \in \mathcal{E}_p$. The geometrical transformation induced by $A$, mapping points $e$ to lines $eA$ is a duality, mapping lines $l$ to points $lA^{-T}$. In case of no non-trivial compatibility preserving congruences the bijection between  equivalence classes of linear maps of velocities and projective transformations of $PG(2)$ are again given by Theorem \ref{thm:duality_matek}. The compatibility conditions of form \eqref{eq:compatibility_c} turn into equations of form \eqref{eq:compatibility_p}, meaning compatibility is preserved.
The effect of non-trivial compatibility preserving congruences is similar to the cases discussed above.
\end{enumerate}
\end{proof}

\begin{figure}[h!]
\includegraphics[width=1\columnwidth]{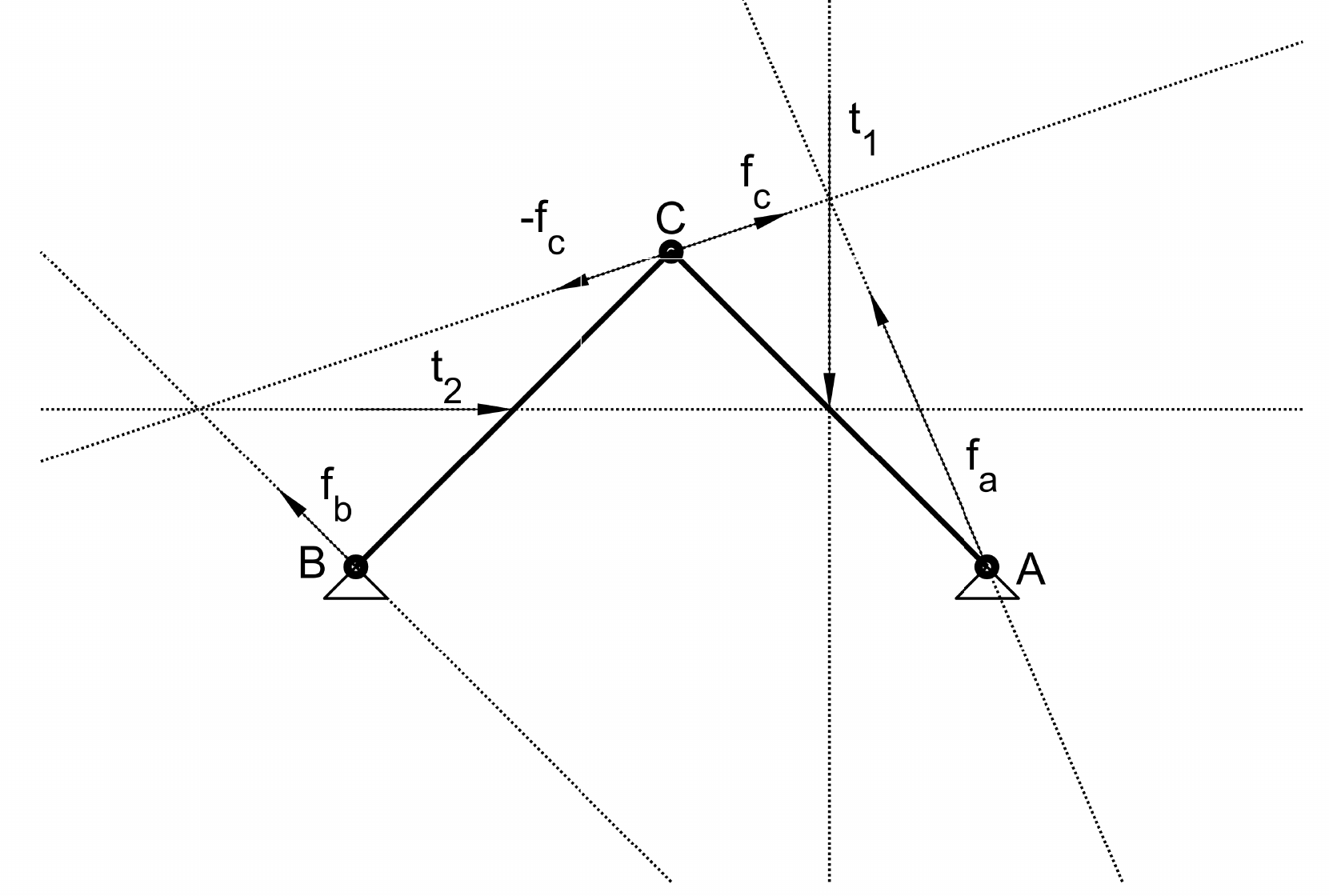}
\caption{Example three jointed structure, the dual of which is presented in Figure \ref{fig:i1}.} 
\label{fig:i2}
\end{figure}

\begin{figure*}[h!]
\centering
\includegraphics[width=0.8\textwidth]{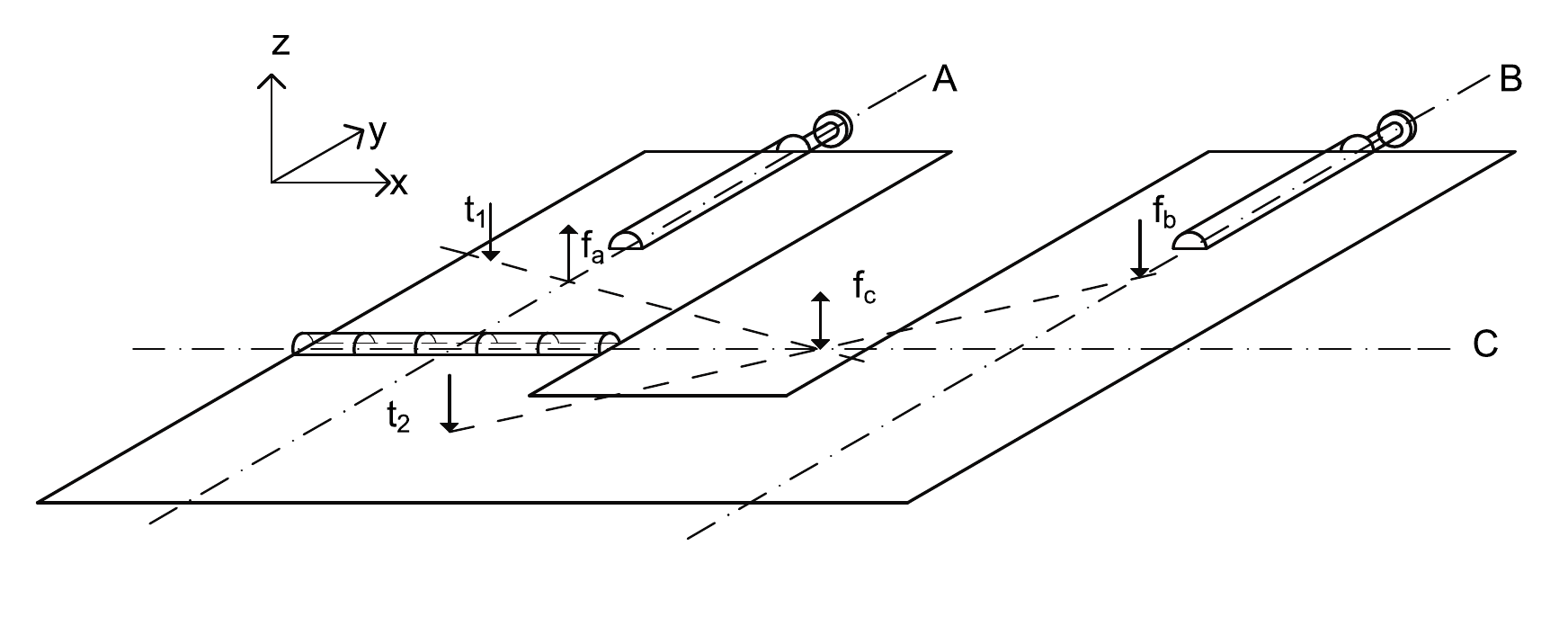}
\caption{Dual structure of the three jointed structure presented in Figure \ref{fig:i2}.}
\label{fig:i1}
\end{figure*}

\begin{corollary}
A planar structure's degrees of statical (and thus kinematical) in- or over-determinacy is invariant both under projective planar transformations and dualities.
\end{corollary}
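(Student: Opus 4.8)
The plan is to read the degree of static over-determinacy as the dimension of the space of self-stresses of the structure — force systems $\{f_i\}$, respecting the prescribed points/lines of action, for which every (sub-)body is in equilibrium under zero external load — and the degree of static under-determinacy (kinematic indeterminacy) as the dimension of the space of mechanisms, i.e. velocity systems $\{e_i\}$ satisfying all the compatibility equations \eqref{eq:compatibility_c} with vanishing prescribed velocities. Each of these is a linear subspace of the space of all force, resp. velocity, systems of the given combinatorial shape, cut out first by the incidence constraints $\scal{f_i}{A_i}=0$ (which are preserved by any projective transformation or duality, as shown in the Preliminaries) and then by the equilibrium, resp. compatibility, equations. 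So it suffices to show that these two dimensions are unchanged under a projective transformation or a duality of the problem.

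First I would invoke Theorems \ref{thm:projective} and \ref{thm:duality}: on the level of the physical quantities, any projective transformation or duality of the statics problem is the composition of an invertible linear map $A$ acting uniformly ($f_i\mapsto f_i A$, from $\mathcal{F}_p$ to $\mathcal{F}_p$ or to $\mathcal{F}_c$) with an equilibrium-preserving congruence $\Psi$ of the force system ($f_i\mapsto \psi_i f_i$, $\psi_i\neq 0$). Both factors are linear isomorphisms of the ambient space of force systems — $A$ because it is invertible and acts block-diagonally, $\Psi$ because every $\psi_i$ is nonzero — and the computation $\sum_{\mathcal{I}}(f_i A)=\bigl(\sum_{\mathcal{I}}f_i\bigr)A$ from the proof of Theorem \ref{thm:projective} together with the defining property of an equilibrium-preserving congruence says exactly that $A$ and $\Psi$, hence their composition, carry the self-stress space of the original problem bijectively onto that of the image. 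A linear isomorphism between two vector spaces forces equality of dimension, so the degree of static over-determinacy is invariant. The same argument, run with parts (ii) of the two theorems and with \eqref{eq:compatibility_c}--\eqref{eq:compatibility_p} in place of the equilibrium count, shows the space of mechanisms maps isomorphically, so the degree of kinematic (static under-) determinacy is invariant as well; alternatively — this is the ``(and thus)'' in the statement — a projective transformation or duality preserves the entire incidence pattern of the problem (each constraint $\scal{f}{\cdot}=0$ going to one constraint of the same type, cf. Figures \ref{fig:i2}--\ref{fig:i1}), hence the Maxwell-type count ``(force unknowns) minus (independent equilibrium equations)'' is manifestly unchanged, and combined with invariance of the self-stress dimension this pins down the mechanism dimension too.

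The point that needs the most care — the main obstacle — is the statically or kinematically indeterminate case, where the geometry of the force (velocity) system does not determine it up to scaling, so the linear map $A$ alone does not reach every equilibrium state sharing the target geometry. This is precisely why the congruence factor is present, and the thing to verify is that an equilibrium-preserving congruence does send self-stresses to self-stresses (immediate from its definition, applied to every sub-body) and commutes with $A$ (noted just after the definition of congruences), so that the composition is a genuine, well-defined linear isomorphism between the two self-stress spaces. Once that is in place the dimension count above goes through verbatim, for any degree of in- or over-determinacy and for dualities exactly as for collineations, which proves the corollary.
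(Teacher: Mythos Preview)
Your argument is correct and the core idea---the invertible linear map $A$ from Theorems~\ref{thm:projective} and~\ref{thm:duality} restricts to a linear isomorphism between the self-stress spaces (resp.\ mechanism spaces) of the two problems, so their dimensions agree---is exactly the paper's. Two points of comparison are worth making. First, the paper disposes of congruences in one sentence: a congruence does not change the geometry of the structure, so the problem after $\Psi$ is literally the same problem and every determinacy count is trivially unchanged; your last paragraph treats this as the ``main obstacle'' and tries to argue that $\Psi$ carries self-stresses to self-stresses, which is both unnecessary and slightly delicate (an equilibrium-preserving congruence is, per its definition, attached to one particular force state, not to the whole self-stress subspace). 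Second, for the kinematic half the paper stays on the force side---showing that a load $f_j$ rendering the equilibrium system inconsistent is mapped by $A$ to a load $f_jA$ with the same property---and then invokes the standard static--kinematic correspondence from the literature; your direct appeal to part~(ii) via \eqref{eq:compatibility_c}--\eqref{eq:compatibility_p} is a cleaner, self-contained alternative that avoids the external citation.
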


\begin{proof}
Let us start by noting that applying a congruence does not change the geometry of the structure, meaning they have no influence on static and kinematic properties of structures. We only have to check the linear maps.\\
Degree of static over-determinacy: We can divide the indices of the force system of an arbitrary body into supporting forces $\mathcal{S}\subset\mathcal{I}$ (internal or external) and loads $\mathcal{L}:=\mathcal{I}\setminus\mathcal{S}$. Solving the equilibrium equations means determining values of $\phi_k$ such that 
\begin{align}
\sum_{\mathcal{L}}f_j+\sum_{\mathcal{S}}\phi_k\bar{f_k}=0
\end{align}  
holds, where $\bar{f_k}$ is a unit force in the appropriate position, according to the convention introduced in subsection \ref{sec:inner}. This can be done for all bodies of the structure. Any possible load $f_j$ leading to a contradiction in the arising system of equations is mapped to a possible load $f_jA$ leading to a contradiction in the image system of equations. As the maps are invertible, any pre-image can be considered an image under the inverse map, implying the other direction.

Degree of static indeterminacy: The images of the linearly dependent forces constituting self stresses are explicitly given by Theorems \ref{thm:projective} and \ref{thm:duality} as linearly dependent forces constituting self stresses. The invertibility of the transformation again gives the other direction.

As the kinematical properties of the structure are tied to the statical ones such that to each kinematical degree of inteterminacy there exists a degree of static overdeterminacy and vice versa \cite{roller_szabo}, the static properties imply the kinematic ones.
\end{proof}

Although this is a long standing result for bar and joint frameworks, plate (or panel/sheet) structures \cite{crapo1982statics,wunderlich1982projective} and grillages \cite{TARNAI1989duality}, for the general case of arbitrary bodies subjected to arbitrary forces this appears to be the first intentional proof. The one in \cite{crapo1982statics} presents many details of line geometry that would give the three dimensional general case had they decided to pursue it.

\subsubsection{Some graphic methods derived through duality}\label{sec:graphic}
One of the notable fields of mechanics where projective geometry is present is graphic statics, mostly developed for planar force systems. Here we will briefly see a few examples how the methods present in it can be used to describe other force and velocity systems. 

\begin{figure}[h!]
\centering
\includegraphics[width=0.8\columnwidth]{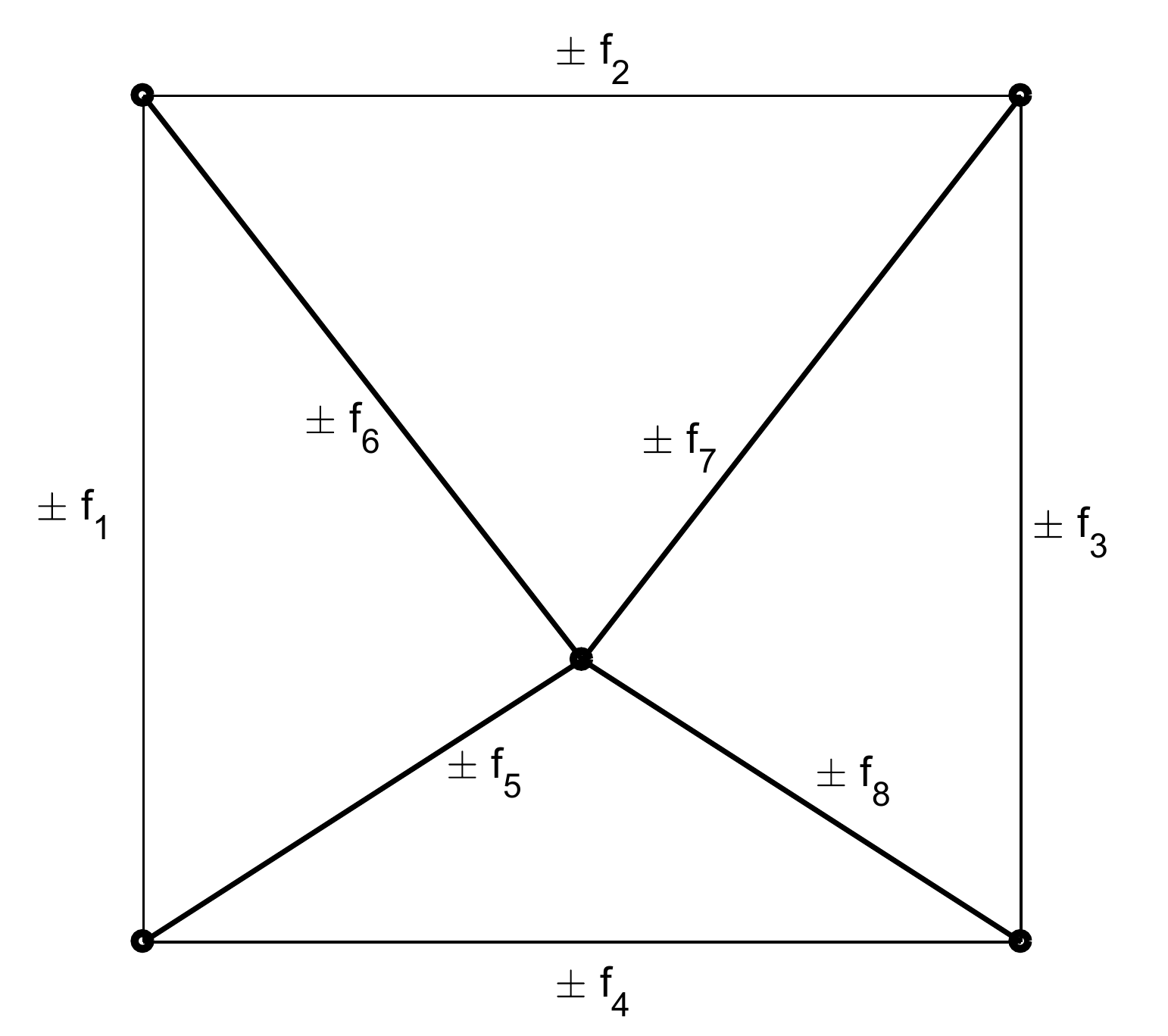}
\caption{Statically indeterminate truss, in a tensegrity set-up. The thin lines represent ropes under tension, the thick lines bars under compression.  The dual grillage is presented in Figure \ref{fig:g2}. Notation $\pm f_i$ is introduced, since each force effects its two endpoints differently: compare with the corresponding three dimensional force diagram in Figure \ref{fig:h}.}
\label{fig:g1}
\end{figure}

Grillages ($\mathcal{F}_c$) have been known\cite{TARNAI1989duality} to be the duals of trusses ($\mathcal{F}_p$) since at least the 1980-s. Through this duality it is known \cite{whiteley1991weavings}, that if a grillage possesses self stress there has to exist a closed a spherical polyhedron ($\mathcal{P}$) whose projection is the dual grid of the grillage. This dual grid in essence defines the geometry of a truss, whose bar forces correspond to the forces between the beams of the grillage, according to Theorem \ref{thm:duality}. As such, the existence of $\mathcal{P}$ is tied to Maxwell's theorem of self stresses in trusses. Faces of $\mathcal{P}$ are formed by planes that are the evaluations of linear combinations of the moment functionals of the truss, while there exists a dual polyhedron $\mathcal{Q}$ which can be considered as a three dimensional force plan of the truss: The vertices of $\mathcal{Q}$ are such that the projection of the edges (difference vectors) on the $x,y$ plane gives the Maxwell-Cremona force plan, while the $z$ coordinates give the moments of each force with respect to the origin. An analogous force plan can be created for the grillage. If the duality relating the two structures is given by the identity matrix, one merely has to relabel the axes of the diagram. An example of this can be seen in Figures \ref{fig:g1},\ref{fig:g2} and \ref{fig:h}.

\begin{figure}[h!]
\includegraphics[width=1\columnwidth]{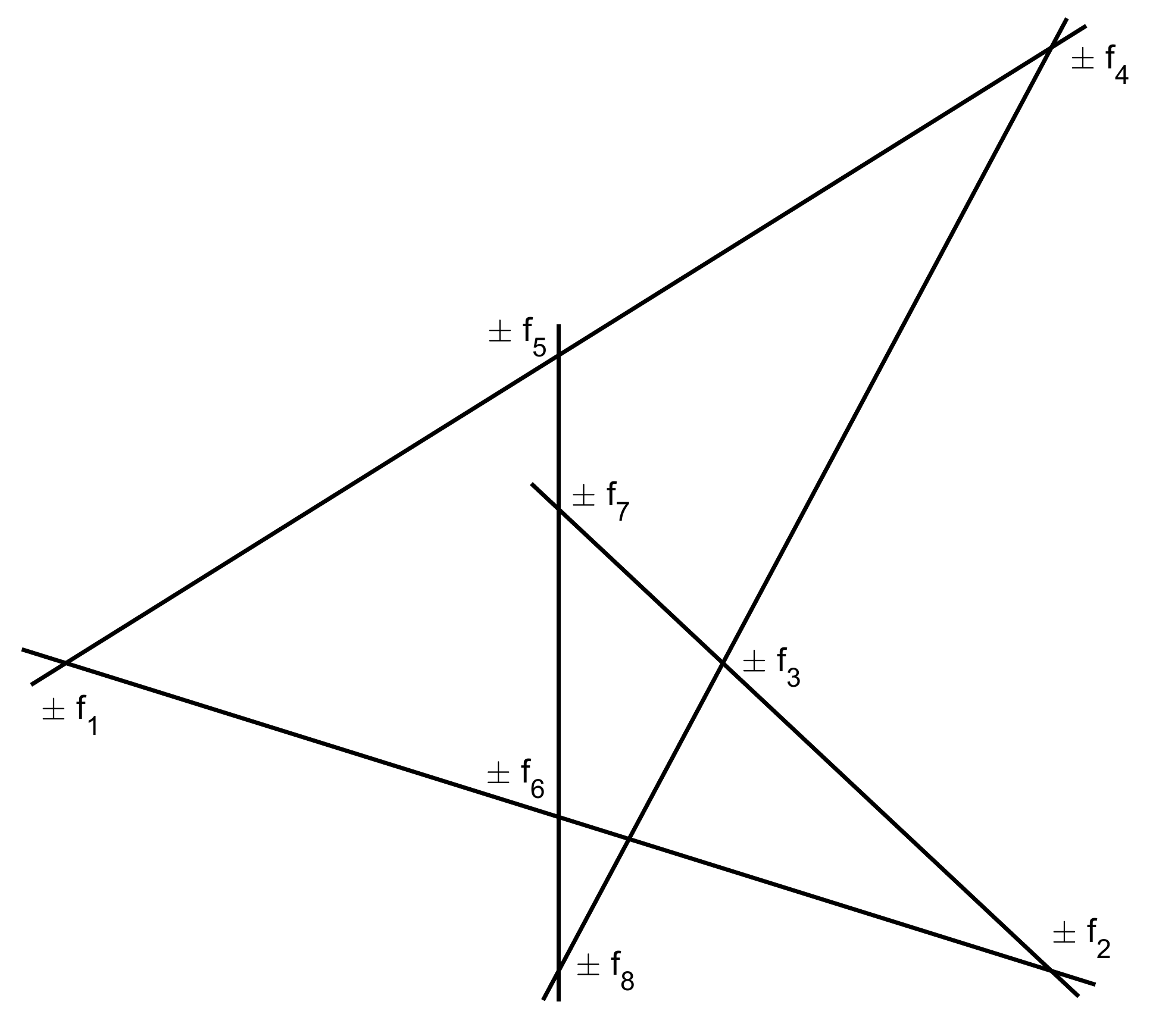}
\caption{Dual grillage of the tensegrity presented in Figure \ref{fig:g1}. At each intersection force $f_i$ effects one beam while force $-f_i$ effects the other one. The corresponding three dimensional force diagram is shown in Figure \ref{fig:h}.}
\label{fig:g2}
\end{figure}

Mechanisms of $\mathcal{E}_p$ possess the property, that the $z$ directional velocity at point $p$ caused by $d \in \mathcal{E}_p$ is $\scal{p}{d}$; meaning the velocity diagram of such mechanism can be conveniently graphed above the $x,y$ plane, resulting in a set of 3 dimensional planes (see planes $s_i$ in Figure \ref{fig:e}). In essence it is the dual of the moment functional of $\mathcal{F}_c$. We can repeat the known graphic statics methods and take a projective 3 dimensional dual of these planes, resulting in a 3 dimensional velocity diagram analogous to polyhedron $\mathcal{Q}$ introduced above. To each body corresponds a vertex representing its absolute velocity, while the relative velocities of the bodies can be measured as the difference vectors between these vertices (see Figure \ref{fig:f}).

Admittedly this type of mechanism is not the most widespread. Maybe gear trains with coplanar axes could be a subject it helps.

\begin{figure}[h!]
\includegraphics[width=1\columnwidth]{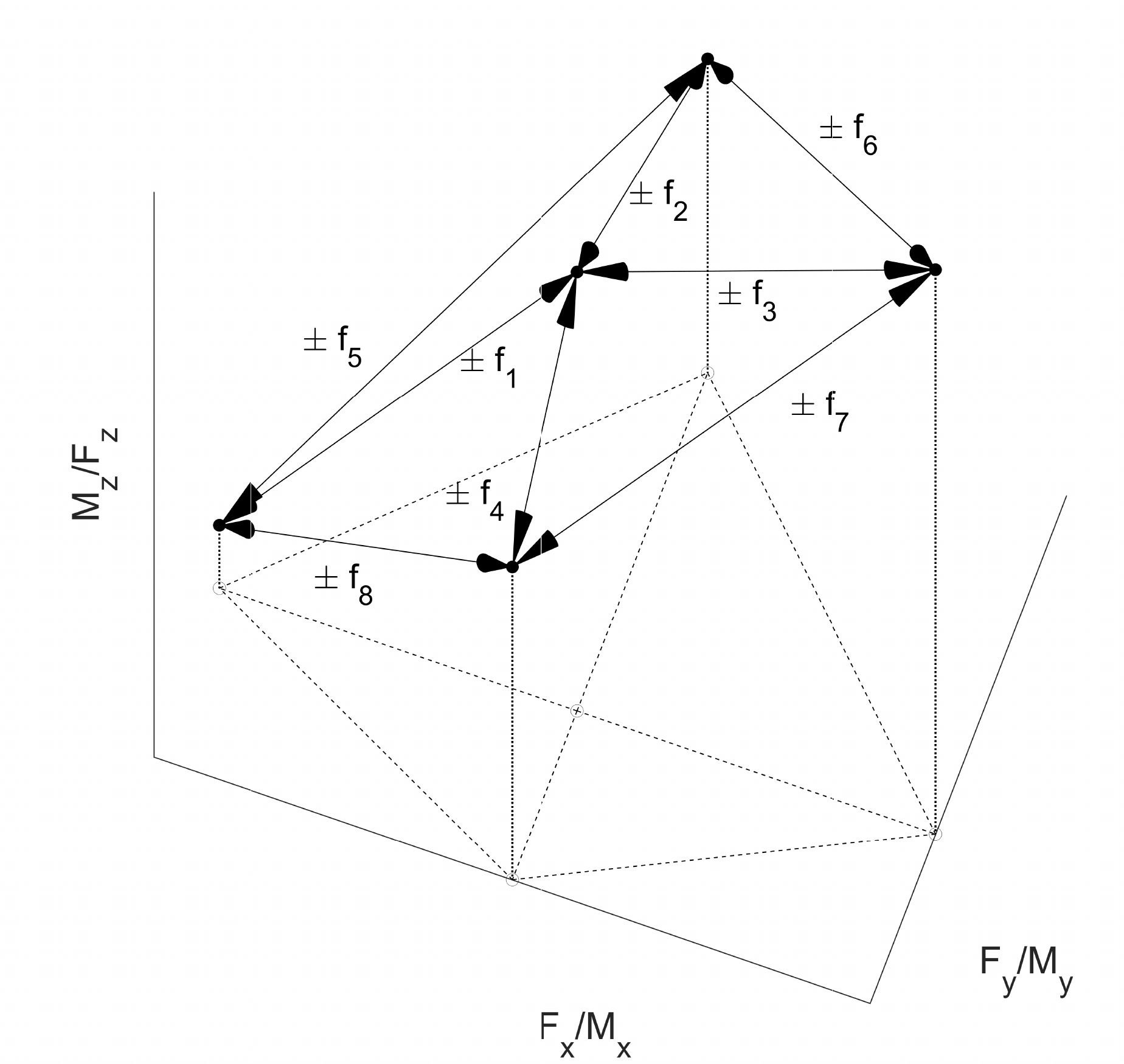}
\caption{Three dimensional force diagram corresponding to the truss in Figure \ref{fig:g1} and the grillage in Figure \ref{fig:g2}.}
\label{fig:h}
\end{figure}

\subsection{Static and kinematic chains}
The third types of transformations are interchanging velocities with forces, namely: $\mathcal{E}_p \leftrightarrow \mathcal{F}_p$, $\mathcal{E}_p \leftrightarrow \mathcal{F}_c$, $\mathcal{E}_c \leftrightarrow \mathcal{F}_c$ and $\mathcal{E}_c \leftrightarrow \mathcal{F}_p$. Here the correspondence between entire systems is not necessary straightforward, since for instance a body can be subjected to multiple forces while it can only possess a singe velocity state. There are several works \cite{shai2001duality,shai2002utilization,shai2006extension} born out of the graph theoretical description of certain static and kinematic systems, utilizing the duality concept coming from graph theory. They match several types of static systems (for example trusses, pillar systems and Stewart platforms) to corresponding types of mechanisms (for example planar linkages or serial robots). We will not attempt to rewrite those equations in our projective setting, but will restrict ourselves to a pair of systems: static and kinematic chains. We will however touch on the "graph theory based" dualization of the Aronhold-Kennedy theorem \cite{shai2006study} and propose a more basic dual-theorem.\\

Under a kinematic chain we will mean a set of $\{e_i\}$ ($i\in \{1\dots n\}$) absolute velocities corresponding to $n$ bodies linked serially. Let 
\begin{align}
e_{j,k}:=e_k-e_j\label{eq:kin_chain_def}
\end{align}
denote their relative velocities. In a real-life example typically the points/lines corresponding to the relative velocities are given, while determining magnitudes of these relative velocities and the absolute velocities is part of the question.\\

Under a static chain, we will mean a set of $\{f_i\}$ ($i\in \{1\dots n\}$) forces carried by some medium. At any point the force $f_i$ can be considered the resultant of some force system, and it changes according to the loads  $f_{j,k}$, as 
\begin{align}
f_{j,k}:=f_k-f_j. \label{eq:stat_chain_def}
\end{align}
Note how the equilibrium condition is embedded in this definition. One practical example of these is a literal chain, whose shape (lines of $f_i$) is dependant on the loads $(f_{j,k})$ it is subjected to, but it can also be the series of resultant forces corresponding to beams carrying a series of loads.

For these restricted systems, we can have the following statement:

\begin{prop}
\begin{enumerate}[label=(\roman*)]
\item[]
\item Invertible linear transformations  $\mathcal{E}_p \leftrightarrow \mathcal{F}_p$, and  $\mathcal{E}_c \leftrightarrow \mathcal{F}_c$ map compatible velocities of kinematic chains and forces of static chains that are in statical equilibrium into each other. There is a bijection between the equivalence classes of these transformations and projective transformations of $PG(2)$.
\item Invertible linear transformations  $\mathcal{E}_p \leftrightarrow \mathcal{F}_c$, and  $\mathcal{E}_c \leftrightarrow \mathcal{F}_p$ map compatible velocities of kinematic chains and forces of static chains that are in statical equilibrium into each other. There is a bijection between the equivalence classes of these transformations and projective dualities of $PG(2)$.
\end{enumerate}
\end{prop}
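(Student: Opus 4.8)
The plan is to reduce the proposition to Theorems~\ref{thm:projective} and~\ref{thm:duality}, adding only two observations: that invertible linear maps commute with the difference operation defining chains, and that a chain carries no non-trivial congruence. For the first, if $A$ is invertible and $e_i\mapsto e_iA$, then by linearity $e_{j,k}=e_k-e_j\mapsto e_kA-e_jA$, so the images again obey the defining relation \eqref{eq:kin_chain_def}; likewise for forces and \eqref{eq:stat_chain_def}. Since statical equilibrium of a static chain is, by the remark following \eqref{eq:stat_chain_def}, nothing more than the relation $f_{j,k}=f_k-f_j$, this already shows that the image of a kinematic chain under any of the four maps is a static chain in equilibrium, and conversely under $A^{-1}$. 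No incidence has been used yet.

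\textbf{Step 2 (the induced map on $PG(2)$ is of the right type).} Compatibility of a kinematic chain means that each relative velocity $e_{j,k}$ is incident with the point or line of its joint, an equation of the form \eqref{eq:compatibility_c}; the loads of a static chain are analogously prescribed by the lines/points along which they act. Now I would read off the coordinate types from the Preliminaries: for $\mathcal{E}_p\leftrightarrow\mathcal{F}_p$ both spaces carry line coordinates and for $\mathcal{E}_c\leftrightarrow\mathcal{F}_c$ both carry point coordinates, whereas for $\mathcal{E}_p\leftrightarrow\mathcal{F}_c$ and $\mathcal{E}_c\leftrightarrow\mathcal{F}_p$ the two types are opposite. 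Hence, exactly as in the proofs of Theorems~\ref{thm:projective} and~\ref{thm:duality}, an invertible $A$ induces a collineation of $PG(2)$ in case (i) and a duality in case (ii), acting on point coordinates by $A$ and on line coordinates by $A^{-T}$; because $\scal{pA}{lA^{-T}}=\scal{p}{l}$, each joint incidence of the kinematic chain is carried precisely to the statement that the corresponding load of the image static chain acts through (resp.\ along) the image of that joint, and the argument is symmetric under $A^{-1}$. This is the substantive content of ``map $\dots$ into each other''.

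\textbf{Step 3 (the bijection).} For a serial chain, prescribing the lines/points of all the $f_i$ together with those of the loads determines the force (or velocity) magnitudes up to one global scalar, so the chain has no non-trivial equilibrium- or compatibility-preserving congruence beyond the trivial uniform scaling, and the ``composition'' clauses of Theorems~\ref{thm:projective}--\ref{thm:duality} collapse to the linear-map part alone. Two matrices induce the same transformation of chains exactly when they induce the same collineation (case (i)), resp.\ the same duality (case (ii)), of $PG(2)$, i.e.\ when they lie in the same class $A_\sim$; and by Theorems~\ref{thm:projectivity} and~\ref{thm:duality_matek} every collineation, resp.\ every duality, arises from some $A\in\mathcal{M}_{3\times3}$. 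Passing to $\mathcal{M}_{3\times3}/\!\sim$ then yields the claimed bijections.

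\textbf{Main obstacle.} I expect Step~2 to be the delicate point: making the dictionary ``joint of the kinematic chain $\leftrightarrow$ line of action of the corresponding load of the static chain'' fully precise and verifying it is the one transported by the induced collineation/duality, without a stray transpose or dual creeping in, since in each pair the two spaces are related by a single matrix while their incidence data may live on objects of opposite projective type. Steps~1 and~3 are routine once the earlier theorems are invoked.
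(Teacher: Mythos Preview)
Your Step~1 is exactly the paper's proof. The paper's argument is this short: it notes that ``the properties of geometric transformation are essentially the same as before'' (so the bijection comes directly from Theorems~\ref{thm:projectivity} and~\ref{thm:duality_matek}), and then observes that linearity turns the defining relation \eqref{eq:kin_chain_def} into \eqref{eq:stat_chain_def}, which is all that ``compatible'' and ``in statical equilibrium'' mean for chains here. Invertibility gives the other direction.

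Your Steps~2 and~3 are extra work based on a stronger reading of the statement than the paper intends. In the paper, compatibility of a kinematic chain is \emph{defined} to be \eqref{eq:kin_chain_def} holding for all $j,k$ (the proof says this explicitly), and equilibrium of a static chain is \eqref{eq:stat_chain_def}; no joint incidence constraints of the form \eqref{eq:compatibility_c} are part of the proposition, so your Step~2 ``obstacle'' never arises. Likewise, the bijection clause here does not mention congruences at all, so you need not argue that chains have only the trivial one or route through Theorems~\ref{thm:projective}--\ref{thm:duality}; citing Theorems~\ref{thm:projectivity} and~\ref{thm:duality_matek} directly suffices. What your longer route buys is a proof that would also cover chains with prescribed joint incidences, which is a natural strengthening but not what is being claimed.
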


\begin{proof}
The properties of geometric transformation are essentially the same as before, the only thing to show is how compatibility and equilibrium imply each other. A kinematic chain is compatible if the velocities are such that equation \eqref{eq:kin_chain_def} holds for all $j,k\in \{1\dots n\}$. Transforming it with an invertible linear map, say $A$ gives
\begin{align}
e_{j,k}A=e_kA-e_jA=f_{j,k}=f_k-f_j
\end{align}
$\forall \ j,k\in \{1\dots n\},$ which is precisely \eqref{eq:stat_chain_def}, containing the equilibrium conditions. Matrix $A$ is invertible by definition, meaning the other direction is similar.
\end{proof}

\begin{figure}[h]
\includegraphics[width=1\columnwidth]{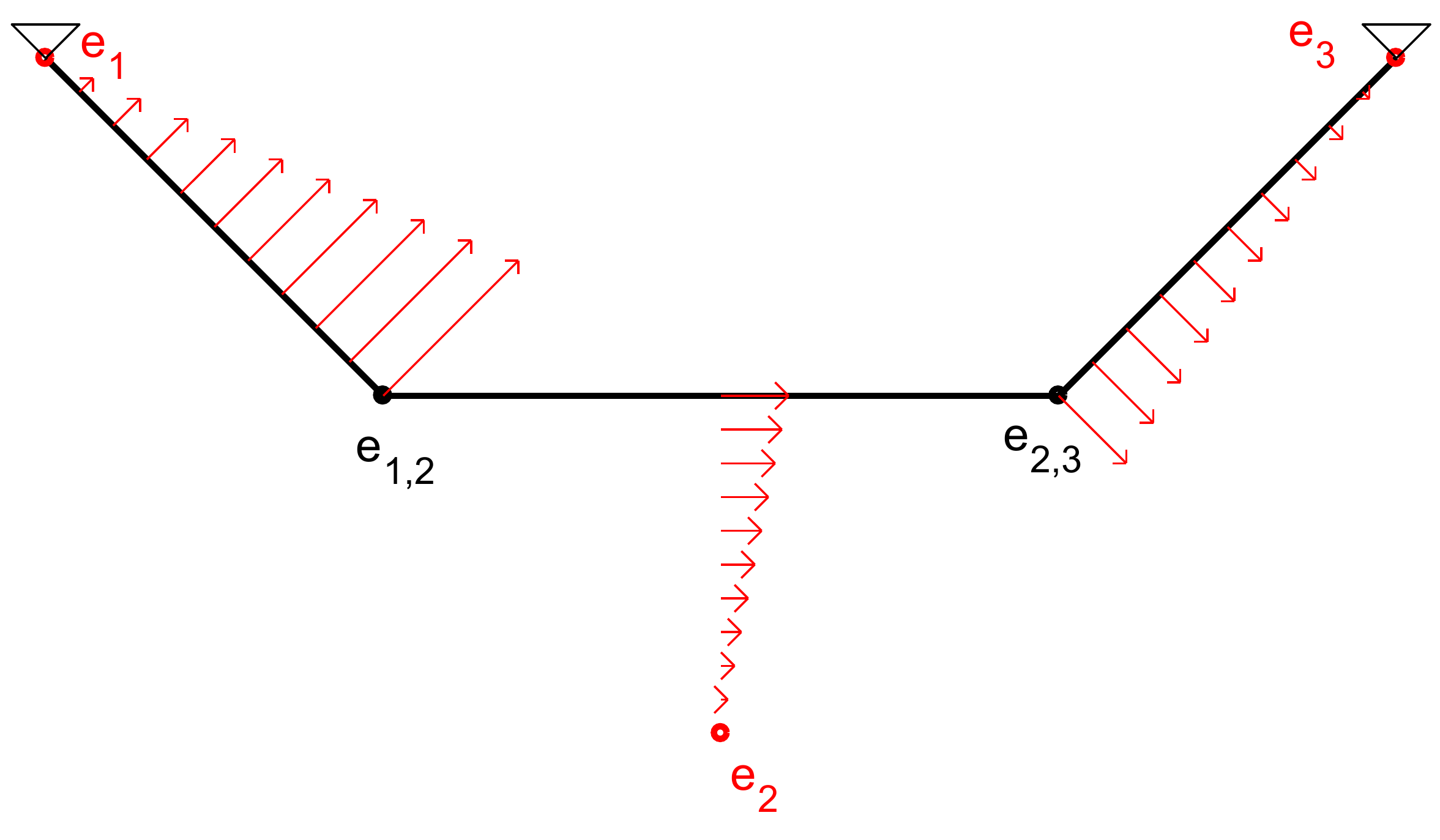}
\caption{Four bar linkage mechanism (also, a kinematic chain), $e_{1,3}$ lies at infinity, where lines $\overline{e_1e_3}$ and $\overline{e_{1,2}e_{2,3}}$ meet. The arrows indicate the velocities of the appropriate points of the bodies.}
\label{fig:d}
\end{figure}

\begin{figure*}[h]
\centering
\includegraphics[width=0.8\textwidth]{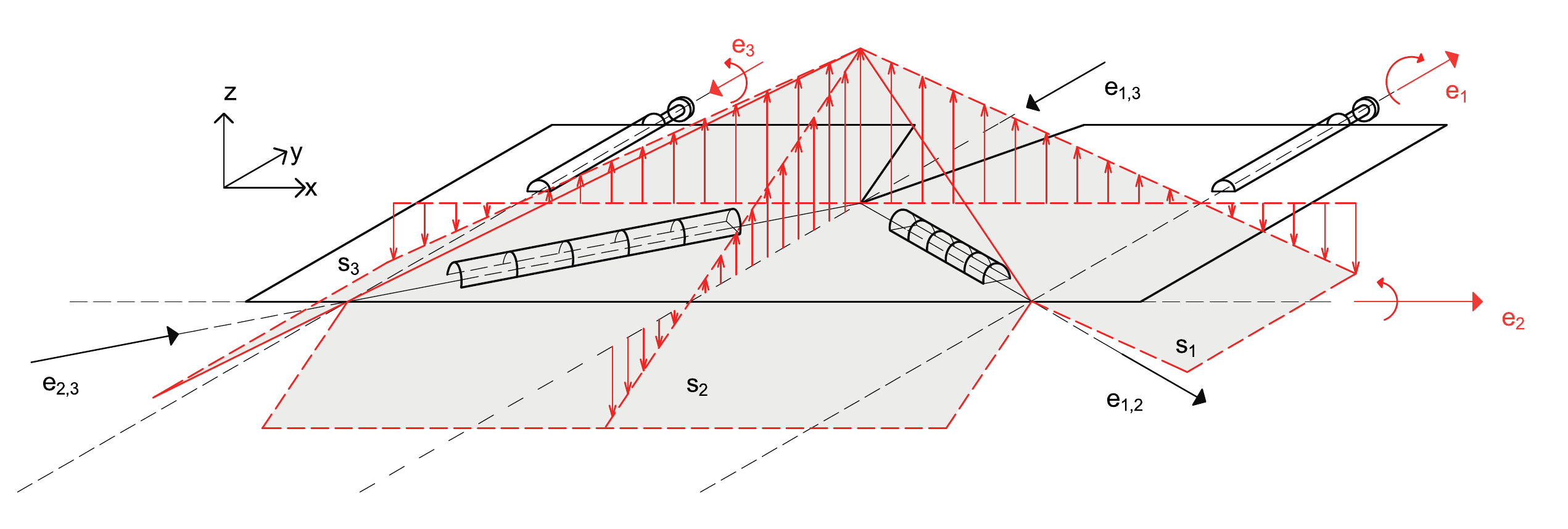}
\caption{Dual mechanism of the kinematic chain shown in Figure \ref{fig:d}. The plates are supported by linear hinges. The arrows orthogonal to the plane indicate the velocities of the appropriate points of the bodies. Planes $s_i$ are the evaluations of  the $z$ directional velocity functions $v_z(p)=\scal{e_i}{p}$.}
\label{fig:e}
\end{figure*}

\begin{figure}[h]
\includegraphics[width=1\columnwidth]{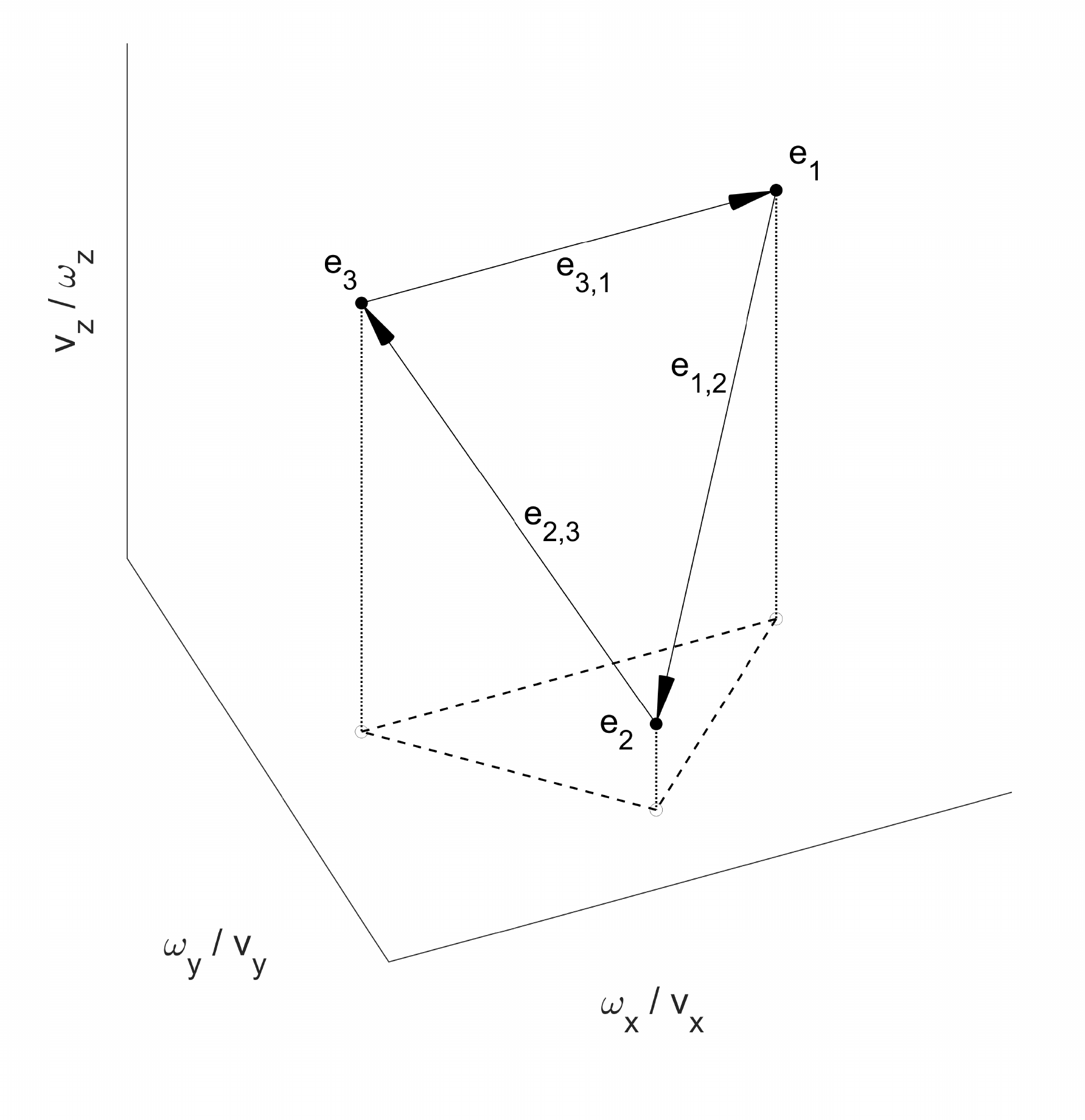}
\caption{Diagram of the velocity states of the mechanisms shown in Figure \ref{fig:d} and in \ref{fig:e}.}
\label{fig:f}
\end{figure}

\subsubsection{The Aronhold-Kennedy theorem and its duals}
One of the often cited theorems in kinematics is the Aronhold-Kennedy theorem \cite{kinematicsbook}.
\begin{theorem}[Aronhold-Kennedy]
If three bodies move in the plane relative to each other, their relative instant centres (the points around which they are rotating relative to each other) are collinear.
\end{theorem}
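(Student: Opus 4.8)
The plan is to read the three relative instant centres as points of $PG(2)$ through the velocity coordinates introduced in the preliminaries, and then to observe that their collinearity is a purely algebraic fact. Concretely, I would take $e_1,e_2,e_3\in\mathcal{E}_c$ to be representants of the absolute velocity states of the three bodies, and recall --- exactly as in the definition \eqref{eq:kin_chain_def} of a kinematic chain --- that the relative instant centre of bodies $j$ and $k$ is the equivalence class $(e_{j,k})_\sim$, where $e_{j,k}=e_k-e_j$.

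The crux is the identity
\begin{align}
e_{1,2}+e_{2,3}=(e_2-e_1)+(e_3-e_2)=e_3-e_1=e_{1,3},
\end{align}
which shows that the three difference vectors $e_{1,2}$, $e_{2,3}$, $e_{1,3}$ lie in a subspace of $\mathbb{R}^3$ of dimension at most two and are therefore linearly dependent. By the fact recorded in the preliminaries that three points of $PG(2)$ are collinear if and only if their representing vectors are linearly dependent, the three relative instant centres lie on a common line. That is the entire argument: once velocities are interpreted as projective point coordinates, the Aronhold--Kennedy theorem is nothing more than the observation that $e_3-e_1$ is a linear combination of $e_2-e_1$ and $e_3-e_2$.

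The only point requiring care --- and the sole, mild, obstacle --- is degeneracy. If some $e_{j,k}=0$ then the corresponding relative centre is undefined; but the hypothesis that the three bodies move relative to one another is precisely the assumption that all three differences are nonzero, so all three equivalence classes genuinely define points of $PG(2)$. If two of the three centres coincide the conclusion is vacuous, since two points always determine a line. I would also note, for completeness, that a relative velocity whose $\omega_z$ component vanishes is a pure relative translation, whose instant centre is the point at infinity $(-v_y,v_x,0)_\sim$; the projective phrasing of collinearity absorbs this case (and the case where the common line is the line at infinity) automatically, which is exactly the kind of bookkeeping the $PG(2)$ description is meant to eliminate.
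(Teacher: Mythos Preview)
Your argument is correct and entirely in the spirit of the paper's framework. Note, however, that the paper does not actually supply its own proof of the Aronhold--Kennedy theorem: it merely states the result with a citation to a kinematics textbook and then moves on to corollaries and duals. So there is no ``paper's proof'' to compare against in the strict sense.

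That said, your proof is line-for-line the kinematic analogue of the paper's proof of Theorem~\ref{thm:diff} (the force dual: pairwise differences of three coplanar forces are concurrent). There the authors write $f_{1,2}+f_{2,3}+f_{3,1}=0$ and read this as the concurrency condition for three lines in $PG(2)$; you write $e_{1,2}+e_{2,3}+e_{3,1}=0$ (equivalently $e_{1,2}+e_{2,3}=e_{1,3}$) and read it as the collinearity condition for three points. Your treatment of the degenerate cases --- nonzero differences guaranteed by the hypothesis, ideal points absorbed by the projective formulation --- is a welcome addition that the paper omits in its dual version.
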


A dual theorem trivially appears:
\begin{corollary}\label{thm:corr_aron_dual}
If three bodies move with coplanar axes of rotations, their relative axes of rotation are concurrent.
\end{corollary}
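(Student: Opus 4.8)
The plan is to obtain the corollary by transporting the Aronhold-Kennedy theorem across a duality $\mathcal{E}_c \leftrightarrow \mathcal{E}_p$. The only ingredients needed are that such a duality is realised by an invertible $3\times 3$ matrix (Theorems \ref{thm:duality_matek} and \ref{thm:duality}), that invertible linear maps preserve linear (in)dependence of triples of vectors, and that --- by the conventions of the preliminaries --- collinearity of three points and concurrency of three lines are \emph{both} just linear dependence of the corresponding triples of homogeneous coordinates.

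First I would fix coordinates so that the common plane containing the three axes of rotation is the $x,y$ plane; then the velocity states of the three bodies are elements $e_1,e_2,e_3\in\mathcal{E}_p$, each $e_i$ being a set of projective line coordinates for the axis of body $i$, and the relative velocity $e_{j,k}:=e_k-e_j\in\mathcal{E}_p$ being line coordinates for the relative axis of rotation of bodies $j$ and $k$. What has to be shown is that the three lines $e_{1,2}$, $e_{2,3}$, $e_{1,3}$ are concurrent, i.e. that the vectors $e_{1,2},e_{2,3},e_{1,3}$ are linearly dependent.

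Next I would invoke the duality. Choose any invertible $A$ defining a duality $\mathcal{E}_c\to\mathcal{E}_p$ (for instance $A$ the identity, which is a duality by Theorem \ref{thm:duality_matek}), and set $\hat e_i:=e_iA^{-1}\in\mathcal{E}_c$. Then $\hat e_1,\hat e_2,\hat e_3$ are velocity states of three bodies moving in the $x,y$ plane relative to one another, and their relative instant centres have point coordinates $\hat e_{j,k}=\hat e_k-\hat e_j=(e_k-e_j)A^{-1}=e_{j,k}A^{-1}$. By the Aronhold-Kennedy theorem these three points are collinear, so $\hat e_{1,2},\hat e_{2,3},\hat e_{1,3}$ are linearly dependent; since $A$ is invertible this is equivalent to $e_{1,2},e_{2,3},e_{1,3}$ being linearly dependent, which is precisely the asserted concurrency. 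Because $A$ is a bijection, every configuration of three bodies with coplanar axes arises in this way, so no case is missed.

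I do not expect a genuine obstacle; the step that needs care is the bookkeeping of points versus lines --- checking that a duality really does send the velocity system of a planar kinematics problem to that of a complementary planar one, and that it converts ``collinear instant centres'' into ``concurrent axes'' rather than the reverse. A useful sanity check, which is in fact a self-contained alternative proof, is the identity $e_{1,3}=e_{1,2}+e_{2,3}$, valid for any $e_1,e_2,e_3$: it shows the three relative-axis triples are always linearly dependent, and, read with point coordinates instead of line coordinates, reproves the original Aronhold-Kennedy theorem in this language.
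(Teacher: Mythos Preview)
Your proof is correct and follows the same idea as the paper, which simply records that the corollary ``follows from the principle of duality of $PG(2)$''; you have merely made this explicit by implementing the duality with a concrete invertible matrix $\mathcal{E}_c\to\mathcal{E}_p$ and then also noting the direct linear-dependence identity $e_{1,3}=e_{1,2}+e_{2,3}$ as a self-contained check. Nothing further is needed.
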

\begin{proof}
Follows from the principle of duality of $PG(2)$.
\end{proof}
An illustration of this can be seen in Figure \ref{fig:e}.
There is a dualization to forces present in literature, where the authors introduced the concept of the equimomental line:

\begin{definition}[Equimomental line]
For two arbitrary forces acting in a single plane, there exists a unique line -the equimomental line- in the plane where the moments about each point on the line, due to the two forces, are equal.
\end{definition}

\begin{theorem}[Shai-Pennock \cite{shai2006study}]
The three equimomental lines defined by three arbitrary coplanar forces must intersect at a unique point.  
\end{theorem}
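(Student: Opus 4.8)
The plan is to read the equimomental line off directly from the linear-algebraic setup of Subsection~\ref{sec:inner}, after which the theorem collapses to the concurrency criterion of the preliminaries. First I would note that, by the definition of the moment functional, the moment of a planar force $f_i\in\mathcal F_p$ about a point $p$ (with appropriately chosen representant $(p_1,p_2,1)$) equals $\scal{p}{f_i}$; hence the set of points at which two forces $f_i,f_j$ produce equal moments is $\{p:\scal{p}{f_i-f_j}=0\}$. This is precisely the incidence relation saying that $p$ lies on the line of $PG(2)$ whose line coordinates are the vector $f_i-f_j$, and the choice of representant of $p$ does not affect it. So the equimomental line of $f_i$ and $f_j$ is nothing but the line $(f_i-f_j)_\sim$, which is a genuine line of $PG(2)$ as soon as $f_i\neq f_j$ as vectors --- the natural reading of ``two arbitrary forces''.

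For three forces $f_1,f_2,f_3$ the three equimomental lines then carry coordinate representants $f_1-f_2$, $f_2-f_3$ and $f_3-f_1$. Since these sum to zero they are linearly dependent, and the criterion ``three lines of $PG(2)$ are concurrent iff their representants are linearly dependent'' from the subsection on elements of the projective plane immediately yields that the three equimomental lines are concurrent. Uniqueness of the common point then follows because two distinct projective lines meet in exactly one point: if $f_1-f_2$ and $f_2-f_3$ are independent, their unique common point automatically lies on the line $(f_3-f_1)_\sim$ since $f_3-f_1$ is a linear combination of the first two, and no other point can be common to all three.

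The one place calling for care --- rather than a genuine obstacle --- is the degenerate configuration. From $(f_1-f_2)+(f_2-f_3)+(f_3-f_1)=0$ one sees that if any two of these difference vectors are parallel then all three are; equivalently $f_1,f_2,f_3$ are affinely dependent in $\mathbb R^3$ (they lie on a common affine line), in which case the three equimomental lines coincide and there is no distinguished intersection point. I would therefore phrase the statement with the standing hypothesis that this does not happen --- which, together with the forces being pairwise distinct, is what ``arbitrary'' is meant to encode --- and emphasise that the argument used nothing beyond the force regarded simultaneously as a vector and as its moment functional: the equimomental line required no auxiliary construction, being literally the difference of two force vectors read as line coordinates. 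As a consistency check one may also note that the same conclusion drops out of Corollary~\ref{thm:corr_aron_dual} via a suitable duality, but the direct route above is shorter and stays entirely within the language of forces.
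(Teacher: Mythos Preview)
Your proof is correct and matches the paper's approach: the paper does not prove the Shai--Pennock statement directly but reformulates it as Theorem~\ref{thm:diff} (pairwise differences of three coplanar forces are concurrent, proved via $f_{1,2}+f_{2,3}+f_{3,1}=0$) together with the subsequent Remark that the equimomental line of two forces is precisely the line of action of their difference. You have simply combined these two ingredients in the reverse order, with some added care about uniqueness and the degenerate case.
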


Here a more fundamental formulation is presented, using only the basic concept of the force:
\begin{theorem}\label{thm:diff}
Given three coplanar forces, their pairwise differences are concurrent.
\end{theorem}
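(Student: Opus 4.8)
The plan is to read everything off the linear-algebraic dictionary of the Preliminaries. A planar force is an element of $\mathcal{F}_p\cong\mathbb{R}^3$, its line of action is the equivalence class of that vector under \eqref{eq:equivalence}, and three lines of $PG(2)$ are concurrent exactly when their representants are linearly dependent. Since $\mathcal{F}_p$ is a vector space, the difference $f_i-f_j$ of two forces is again an element of $\mathcal{F}_p$, so it has a well-defined line of action $(f_i-f_j)_{\sim}$ as soon as $f_i\neq f_j$. Observe also that this line is precisely the equimomental line of $f_i$ and $f_j$: the condition $\scal{p}{f_i}=\scal{p}{f_j}$ is equivalent to $\scal{p}{f_i-f_j}=0$, i.e.\ to $p\in(f_i-f_j)_{\sim}$. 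Hence the statement to be proven contains the Shai--Pennock theorem but is phrased using only the vector $f_i-f_j\in\mathcal{F}_p$, with no auxiliary notion.

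First I would fix representants $f_1,f_2,f_3\in\mathbb{R}^3$ of the three forces and form the three pairwise differences $g_{12}:=f_1-f_2$, $g_{23}:=f_2-f_3$, $g_{31}:=f_3-f_1$. The whole proof rests on the single identity
\begin{align}
g_{12}+g_{23}+g_{31}=(f_1-f_2)+(f_2-f_3)+(f_3-f_1)=0,
\end{align}
which is a non-trivial linear relation and so exhibits $g_{12},g_{23},g_{31}$ as linearly dependent vectors of $\mathbb{R}^3$. By the concurrency criterion above, the three lines $(g_{12})_{\sim},(g_{23})_{\sim},(g_{31})_{\sim}$ pass through a common point of $PG(2)$; when $\operatorname{span}\{g_{12},g_{23},g_{31}\}$ is two-dimensional that point is represented by any non-zero vector orthogonal to all three differences. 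Because a linear relation is insensitive to rescaling of representants and to the sign conventions chosen in the differences, the conclusion is independent of all these choices.

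The only point needing attention --- and the closest thing here to a genuine obstacle --- is the degenerate behaviour, which one must decide how to package into the hypotheses. If some difference vanishes, say $f_1=f_2$, then two of the given forces coincide and the assertion is vacuous; this is excluded by reading ``three forces'' as three with pairwise distinct lines of action. If all three differences are non-zero but their span is one-dimensional, the three lines coincide and are trivially concurrent. In the remaining generic case the span is two-dimensional and the common point is the unique one described above, which may lie at infinity --- so ``concurrent'' is to be understood projectively, exactly as concurrency of forces is used throughout the paper. Nothing beyond the displayed identity is required.
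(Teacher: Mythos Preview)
Your argument is correct and is essentially identical to the paper's own proof: both form the three pairwise differences and observe the telescoping identity $g_{12}+g_{23}+g_{31}=0$, which is exactly the linear-dependence (concurrency) condition for lines in $PG(2)$. Your additional remarks on the equimomental-line interpretation and on the degenerate cases go a bit beyond the paper's terse proof but do not change the approach.
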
 
\begin{proof}
Denoting the forces with $f_1$, $f_2$, $f_3$ $\in \mathcal{F}_p$, their pairwise differences are
\begin{align}
f_{1,2}&=f_2-f_1 \label{eq:af1} \\ 
f_{2,3}&=f_3-f_2 \ \label{eq:af2}\\
f_{3,1}&=f_1-f_3.
\end{align}
Adding \eqref{eq:af1} to \eqref{eq:af2} gives

\begin{align}
f_{1,2}+f_{2,3}=f_3-f_1&=-f_{3,1} \\
f_{1,2}+f_{2,3}+f_{3,1}&=0
\end{align}
which is precisely the concurrency condition of three lines in $PG(2)$.
\end{proof}

From the proof it is apparent that this is in fact an equilibrium condition: the fact that that three coplanar forces are in equilibrium if and only if their lines of action are concurrent is part of engineering education since at least Bow \cite{bow2014economics} and Culmann \cite{culmann1875graphische}. 
\begin{remark}
It is easy to see how the equimomental line corresponding to two forces coincides with the line of action of the difference of the two forces.
\end{remark}

In our context, a static-dual theorem of this again appears trivially:
\begin{corollary}
Given three forces orthogonal to a plane, the points of attack of their pairwise differences are collinear on the plane.
\end{corollary}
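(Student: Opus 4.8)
The plan is to mirror the proof of Theorem~\ref{thm:diff} almost verbatim, exchanging the roles of points and lines. Denote the three forces by $f_1,f_2,f_3\in\mathcal{F}_c$. Since $\mathcal{F}_c$ is a three dimensional vector space, the pairwise differences $f_{1,2}:=f_2-f_1$, $f_{2,3}:=f_3-f_2$ and $f_{3,1}:=f_1-f_3$ are again elements of $\mathcal{F}_c$; by the coordinate convention of the complementary planar system each of them is therefore a homogeneous coordinate triplet of a well defined point of attack in the plane. First I would simply add the three differences, obtaining $f_{1,2}+f_{2,3}+f_{3,1}=0$, i.e.\ the three representing vectors are linearly dependent. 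By the identification of elements of $\mathcal{F}_c$ with points of $PG(2)$, linear dependence of the representants is precisely collinearity of the corresponding points; restricting to the appropriately chosen representants of subsection~\ref{sec:inner} this is collinearity in the Euclidean plane, which is the claim.

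It is also worth recording the alternative, even shorter route: the corollary follows immediately by combining Theorem~\ref{thm:diff} with the duality result Theorem~\ref{thm:duality}. An invertible linear map $\mathcal{F}_p\leftrightarrow\mathcal{F}_c$ carries the three coplanar forces (whose differences are concurrent) to three orthogonal forces, commutes with the formation of differences because it is linear, and realises a projective duality of $PG(2)$; a duality turns the concurrency of the three difference-lines into the collinearity of the three difference-points. I would present the direct computation as the main argument and mention this as a one-line remark.

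The hard part is essentially nonexistent; the only thing to be careful about is the bookkeeping of the coordinate conventions, namely that the raw triplet $f_k-f_j$ is genuinely a representant of the equivalence class encoding the point of attack of the difference force, and that the zero triplet correctly expresses ``the three points are collinear'' rather than signalling a degenerate exception. As in the proof of Theorem~\ref{thm:diff}, the argument never normalises the representants, so no case distinction (e.g.\ a difference whose point of attack lies at infinity) is needed.
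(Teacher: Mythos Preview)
Your proposal is correct and essentially matches the paper's own proof, which is the single sentence ``Follows from the principle of duality of $PG(2)$.'' Your direct computation is precisely what that invocation unpacks to---the same algebra as in Theorem~\ref{thm:diff}, reinterpreted with $\mathcal{F}_c$ carrying point coordinates instead of line coordinates---so you have merely made explicit what the paper leaves implicit; your alternative route via Theorem~\ref{thm:duality} is a valid variant but more machinery than needed.
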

\begin{proof}
Follows from the principle of duality of $PG(2)$.
\end{proof}  
However, this can be formulated in a stronger way, whose projective special case is the corollary given above:
\begin{theorem}
Given three concurrent forces in $3D$, their pairwise differences are coplanar.
\end{theorem}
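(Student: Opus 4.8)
The plan is to work in the standard screw (Plücker) representation of a force in $3D$: force $i$ is the pair $(\vec{F}_i,\vec{M}_i)$, where $\vec{F}_i$ is the force vector and $\vec{M}_i$ its moment about the origin, so that $\vec{M}_i=\vec{p}_i\times\vec{F}_i$ for any point $\vec{p}_i$ on the line of action. The first step is to translate the hypothesis: three forces are concurrent iff some point $\vec{q}$ lies on all three lines of action, equivalently iff the moment of each force about $\vec{q}$ vanishes, i.e. $\vec{M}_i=\vec{q}\times\vec{F}_i$ for $i=1,2,3$. This turns the assumption into three vector identities of one uniform shape.

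The second step is to inspect a pairwise difference. Writing $f_{j,k}:=f_k-f_j=(\vec{F}_k-\vec{F}_j,\;\vec{M}_k-\vec{M}_j)$ and substituting the concurrency identities gives $\vec{M}_k-\vec{M}_j=\vec{q}\times(\vec{F}_k-\vec{F}_j)$. Hence $f_{j,k}$ is again a pure force (a zero-pitch wrench) whose line of action passes through $\vec{q}$ --- or the zero screw, in the degenerate case $\vec{F}_j=\vec{F}_k$. So all three difference forces pass through the single point $\vec{q}$, exactly as in the planar Theorem~\ref{thm:diff} their lines of action are concurrent.

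The third step handles directions: the three differences point along $\vec{F}_2-\vec{F}_1$, $\vec{F}_3-\vec{F}_2$, $\vec{F}_1-\vec{F}_3$, and these sum to the zero vector, so they are linearly dependent and span a subspace $V\subseteq\mathbb{R}^3$ with $\dim V\le 2$. Combining steps two and three, each difference line lies in the affine plane $\vec{q}+V$ (enlarging $V$ to dimension exactly two if necessary), so the three lines are coplanar. An equivalent packaging I might prefer for the write-up: the forces whose line of action passes through the fixed point $\vec{q}$, together with the zero screw, form a $3$-dimensional linear subspace $\mathcal{W}_{\vec{q}}$ of the $6$-dimensional screw space; the three differences lie in $\mathcal{W}_{\vec{q}}$ and are linearly dependent, hence lie in a $2$-dimensional subspace of it, which corresponds precisely to a pencil of coplanar lines through $\vec{q}$.

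The one place needing care --- the main obstacle, such as it is --- is the degenerate configurations: two of the forces equal, all three force vectors parallel, or $\vec{q}$ (and hence the ``plane'' $\vec{q}+V$) lying at infinity, which is exactly the projective special case of the preceding corollary where the forces are orthogonal to a fixed plane. In each of these the conclusion survives, but one should read ``coplanar'' in $PG(3)$ rather than in the affine picture, consistently with how ``concurrent'' is read in the hypothesis; verifying that the statement degrades gracefully in all these cases is the only real work beyond the three steps above.
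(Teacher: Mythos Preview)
Your proof is correct and rests on the same core identity as the paper's: the three pairwise differences sum to zero, hence are linearly dependent. The paper's own proof is a single sentence --- it simply remarks that since the forces are concurrent one may treat them as ordinary vectors in $\mathbb{R}^3$ based at the common point, whereupon the computation of Theorem~\ref{thm:diff} applies verbatim and linear dependence of three vectors in $\mathbb{R}^3$ gives coplanarity. Your screw-theoretic treatment makes explicit what the paper leaves implicit (that each difference is again a zero-pitch wrench through the same point $\vec{q}$) and addresses the degenerate and ideal-point cases; this buys rigor and a cleaner link back to the preceding corollary, at the cost of machinery the paper deliberately avoids.
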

\begin{proof}
Identical to the proof of Theorem \ref{thm:diff}, with the distinction that the vectors involved are actual force vectors in $3D$.
\end{proof}  

\subsubsection{Static chains and moment diagrams}
One of the typical examples of static chains are the series of resultant forces inside a beam or frame. As engineers are usually interested in the moment diagram of the structure, a graphic way of obtaining it for certain loads is presented here. Based on oral discussions this phenomenon is not unknown to thrust-line researchers, but the literature seems to contain it in less general forms for arches \cite{fuller1675arch}, or straight beams \cite{wolfe1921graphical}. We unify the two approaches, by keeping the unconstrained shape of the arches and allowing supports in the middle of the structure, similarly to beams. The applicability requires the assumption that if the structure is subjected to force (load or support force) $f_{j,k}$ at point $p$, then $p$ must lie on the line of $f_{j,k}$. In engineering practice this is a reasonable assumption.

\begin{prop}
Given a structure with the shape of a curve $p(t)$ $(t\in [0,1])$, subjected to forces such that in the arising static chain all $f_{j,k}$ forces are parallel and $f_{j,k}$ is not parallel to $f_{i}$, then 
\begin{enumerate}[label=(\roman*)]
\item it is possible to pair each point $p(t)$ of the structure with point $q(t)$ along the line of the static chain, such that $q(t)$ lies on the line of action of the force the structure is subjected to at point $p(t)$.  

\item line segments $\overline{p(t),q(t)}$ are parallel to $f_{j,k}$, and their signed length is proportional to the value of the moment function describing the bending moment the structure is subjected to at each point $p(t)$.
\end{enumerate}
\end{prop}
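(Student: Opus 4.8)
The plan is to realise $q(t)$ as the intersection of two lines: the line of action $f(t)_{\sim}$ of the resultant force that the static chain carries at the station $p(t)$ — for a literal chain, the tangent to the chain there — and the line $\ell(t)$ through $p(t)$ parallel to the common direction of the loads. First I would fix a representative $\vec{u}$ for the common direction of all the $f_{j,k}$. Since every $f_{j,k}$ has force vector parallel to $\vec{u}$, its line of action has direction $\vec{u}$ and hence contains the ideal point $(\vec{u},0)_{\sim}$, i.e. $\scal{(\vec{u},0)}{f_{j,k}}=0$. By the standing assumption, the force $f_{j,k}$ acting on the structure at $p(t)$ has $p(t)$ on its line of action, so that line of action is exactly $\ell(t):=\{(p(t)+s\vec{u},1)\ \vert\ s\in\mathbb{R}\}$ (by the Remark on distributed forces, the load density at $p(t)$ behaves the same way in the continuous case). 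Let $f(t)\in\mathcal{F}_p$ be the resultant of the forces acting on one of the two parts into which the cross-section at $p(t)$ separates the structure; this is the value of the static chain at $t$, and the moment $\scal{(p(t),1)}{f(t)}$ of this resultant about $p(t)$ is the bending moment the structure carries at $p(t)$.

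Next I would record the algebraic fact that does the work: $c:=-\scal{(\vec{u},0)}{f(t)}$ is a nonzero constant. It is nonzero because $\scal{(\vec{u},0)}{f(t)}=0$ would place the ideal point $(\vec{u},0)_{\sim}$ on $f(t)_{\sim}$, forcing $f(t)$ to be parallel to $\vec{u}$ and hence to the $f_{j,k}$, contrary to hypothesis. It is constant because passing from one station to the next changes $f(t)$ only by loads $f_{j,k}$, each satisfying $\scal{(\vec{u},0)}{f_{j,k}}=0$; in the continuous setting, $\tfrac{d}{dt}f(t)$ is a force parallel to $\vec{u}$, so $\tfrac{d}{dt}\scal{(\vec{u},0)}{f(t)}=0$. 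As a consequence the lines $\ell(t)$ and $f(t)_{\sim}$ are never parallel — $\ell(t)$ contains $(\vec{u},0)_{\sim}$ while $f(t)_{\sim}$ does not — so they meet in a unique finite point, which I take to be $q(t)$.

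Then I would solve for $q(t)$ explicitly: a point $(p(t)+s\vec{u},1)$ of $\ell(t)$ lies on $f(t)_{\sim}$ iff $\scal{(p(t),1)}{f(t)}+s\,\scal{(\vec{u},0)}{f(t)}=0$, i.e. $s=\tfrac{1}{c}\scal{(p(t),1)}{f(t)}$, so $q(t)=\big(p(t)+\tfrac{1}{c}\scal{(p(t),1)}{f(t)}\,\vec{u},\,1\big)$. Part (i) is then immediate: $q(t)$ lies on $f(t)_{\sim}$, the line of the static chain at $t$, and on $\ell(t)$, which is the line of action of the force the structure is subjected to at $p(t)$. For part (ii), $q(t)-p(t)=\tfrac{1}{c}\scal{(p(t),1)}{f(t)}\,\vec{u}$ is parallel to $\vec{u}$, hence to every $f_{j,k}$, and its signed length measured in the unit $\vec{u}$ equals $\tfrac{1}{c}\scal{(p(t),1)}{f(t)}$, which is proportional to the bending moment $\scal{(p(t),1)}{f(t)}$ at $p(t)$.

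I expect the main obstacle to be conceptual rather than computational: pinning down, within this paper's formalism, that $\scal{(p(t),1)}{f(t)}$ really is ``the bending moment the structure is subjected to at $p(t)$'' for an arbitrarily shaped curve $p(t)$, and that the splitting of the static chain at the cross-section $p(t)$ — and therefore $f(t)$ itself — is unambiguous. This is exactly where the assumption that a force $f_{j,k}$ applied at $p$ has $p$ on its line of action is used: it makes the load at $p(t)$ contribute zero moment about $p(t)$, so $f(t)$ may be taken as the resultant of either part without affecting $\scal{(p(t),1)}{f(t)}$, and it is also what makes $q(t)\in\ell(t)$ automatically incident to the line of that applied force in part (i). The parallelism of all the $f_{j,k}$ enters only through the one-line computation $\scal{(\vec{u},0)}{f_{j,k}}=0$, but it is essential, as it is what keeps $c$ constant and the segments $\overline{p(t),q(t)}$ all of one direction.
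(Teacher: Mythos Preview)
Your argument is correct and follows essentially the same route as the paper: define $q(t)$ by intersecting the resultant line $f(t)_{\sim}$ with the line through $p(t)$ in the common load direction $u$, then read off the signed length as $-\scal{p(t)}{f(t)}/\scal{u}{f(t)}$. You are in fact slightly more careful than the paper on one point: you explicitly argue that $c=-\scal{(\vec u,0)}{f(t)}$ is constant across segments (since $f$ changes only by loads $f_{j,k}$ with $\scal{(\vec u,0)}{f_{j,k}}=0$), whereas the paper's ratio computation compares only two points sharing the \emph{same} resultant $f_i$ and leaves the cross-segment constancy of the scale factor implicit.
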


\begin{proof}
\begin{enumerate}[label=(\roman*)]
\item[]
\item As per the starting assumption, if the structure is subjected to force (load or support force) $f_{i-1,i}$ at point $p(t_1)$, then  $p(t_1)$ must lie on the line of $f_{i-1,i}$. By the definition of the pairing, point $q(t_1)$ has to lie at the intersection point of $f_i$ and $f_{i-1}$. The statical equilibrium $f_{i}=f_{i-1}+f_{i-1,i}$ implies $q(t_1)$ also lies on the line $f_{i}$. A similar argument can be stated about points $p(t_2)$, $q(t_2)$ and force $f_{i,i+1}$, which is parallel to $f_{i-1,i}$ according to one of the validity criteria of the theorem. This implies a parallel projection between each point of $p(t)$ and $q(t)$ between $t_1$ and $t_2$, and it can be repeated on all segments $(t \in [0,1])$. This projection is the desired pairing.
\item Let us represent the ideal point in the direction of $f_{i,j}$ with 
\begin{align}
u:=(u_x,u_y,0) \text{ such that } \norm{u}=1 \label{eq:ideal_u}
\end{align}
holds. For any two points $p(t_1)$ and $p(t_2)$ of the structure sharing the property that they are subjected to the same resultant $f_i$, the appropriate values of the moment function are 
\begin{align}
m(t_1)=\scal{p(t_1)}{f_i}\label{eq:moments1}
\end{align}
and
\begin{align}
m(t_2)=\scal{p(t_2)}{f_i}. \label{eq:moments2}
\end{align}
Due to the collinearity of the three points we can have
\begin{align}
q(t_1)=p(t_1)+\lambda(t_1)u. \label{eq:linkomb1}
\end{align}
After inspecting \eqref{eq:ideal_u} it is apparent that the signed distance of $p(t_1)$ and $q(t_1)$ is precisely $\lambda(t_1)$. Its value can be calculated using $\scal{q(t_1)}{f_i}=0$ as
\begin{align}
\lambda(t_1)=-\frac{\scal{p(t_1)}{f_i}}{\scal{u}{f_i}}. \label{eq:lambda1}
\end{align}
This will never give division by zero, due to one of the conditions of the theorem: $f_i \nparallel f_{j,k} \implies \scal{u}{f_i} \neq 0$ (non-parallel forces meet at a finite point). After repeating equations \eqref{eq:linkomb1} and \eqref{eq:lambda1} using parameter value $t_2$, and substituting everything into equations \eqref{eq:moments1} and \eqref{eq:moments2}, they can be rearranged into:
\begin{align}
\frac{m(t_1)}{m(t_2)}=\frac{\lambda(t_1)}{\lambda(t_2)}
\end{align}
which is the desired statement.
\end{enumerate}
\end{proof}

\begin{remark}
This works even if the structure is parallel to the direction of $f_{j,k}$, although in this case the line segments overlap. 
\end{remark}

\begin{remark}
It is easy to see how this remains valid in case of distributed forces, where $q(t)$ turns into a $C_1$ continuous curve.
\end{remark}

\begin{remark}
In case of cantilevers the end load need not be unidirectional, as it plays the role of ${f_n}$. On the other hand supports in the middle of the structure play the role of $f_{j,k}$ and have to be unidirectional.
\end{remark}

\begin{figure}[h!]
\centering
\includegraphics[width=0.75\columnwidth]{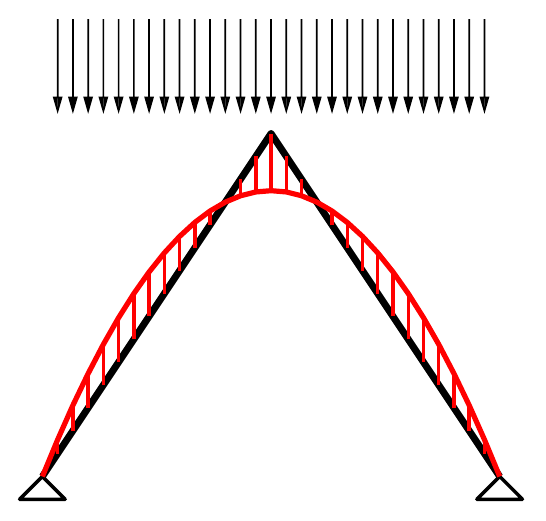}
\caption{In case of unidirectional loads the moment diagram appears between the line of action of the resultant and the structure, if the two curves are paired in the appropriate direction.}
\label{fig:c}
\end{figure}

The natural question arises: what happens when we lift the directional constraint on $f_{j,k}$? If we want to coherently pair each point $p(t)$ to a single point $q(t)$ on the force-chain (in other words we want a continuous $q(t)$ curve), we have to rely on the intersection points ($o_i$) of $f_{i-1,i}$ and $f_{i,i+1}$ to determine the points from which  points $p(t)$ can be projected centrally to $f_i$, giving points $q(t)$.
Such construction can be seen in Figure \ref{fig:a}. The gain with lifting the restriction can be summed up in the following proposition (illustrated in Figure \ref{fig:b}):

\begin{figure}[h!]
\includegraphics[width=1\columnwidth]{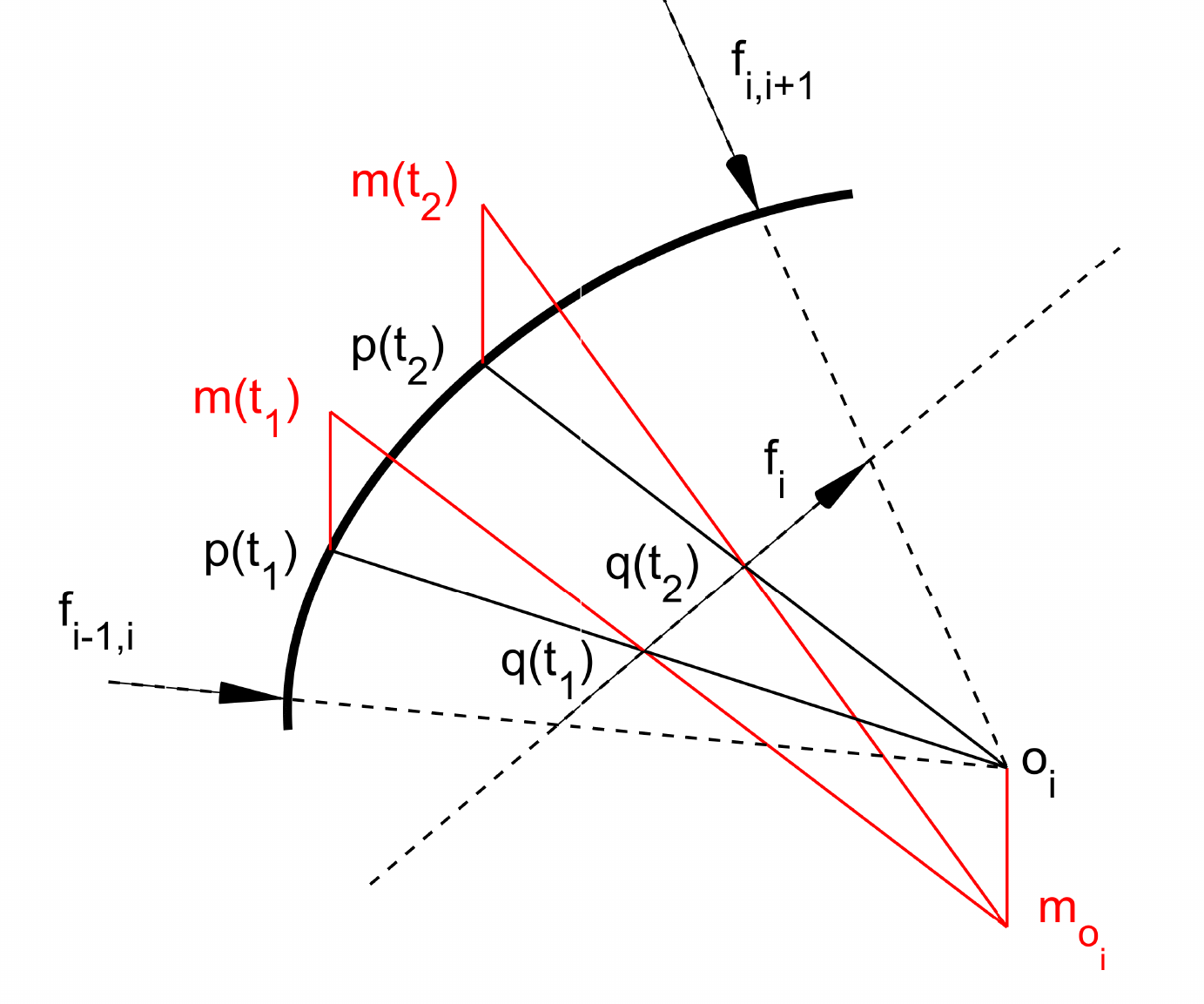}
\caption{Moment diagram candidate, in case of non-parallel loads. The structure is shown in thick black, while the moment values in red. (Axonometric figure, the moment values are measured in the vertical direction.)}
\label{fig:a}
\end{figure}

\begin{prop}
Given a structure with the shape of a curve $p(t)$ $(t\in [0,1])$, supported only on its ends and subjected to a radial load uniformly distributed along the circumference of a circle with the corresponding radii, then 
\begin{enumerate}[label=(\roman*)]
\item it is possible to pair each point $p(t)$ of the structure with point $q(t)$ along the curve of the static chain, such that $q(t)$ lies on the line of action of the force the structure is subjected to at point $p(t)$.  
\item line segments $\overline{p(t),q(t)}$ are parallel to the radial direction of the load, and their signed length is proportional to the value of the moment function describing the bending moment the structure is subjected to at each point $p(t)$.
\end{enumerate}
\end{prop}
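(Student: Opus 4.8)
The plan is to repeat the argument of the previous proposition essentially verbatim, replacing the parallel projection used there — projection from the common ideal point of the parallel increments $f_{j,k}$ — by central projection from the common centre $O$ of the radial increments.

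For (i): at a junction where the structure receives an incremental load $f_{i-1,i}$ at the point $p(t_1)$, the standing assumption places $p(t_1)$ on the line of $f_{i-1,i}$, while radiality places that line through $O$; hence $\overline{O\,p(t_1)}$ is the line of action of $f_{i-1,i}$. The equilibrium relation $f_i=f_{i-1}+f_{i-1,i}$ presents $f_{i-1},f_{i-1,i},f_i$ as three forces in equilibrium, so by Theorem \ref{thm:diff} their lines are concurrent and $q(t_1):=f_{i-1}\cap f_i$ lies on $\overline{O\,p(t_1)}$. Since two consecutive increments both have lines through $O$, on the sub-arc carrying a fixed resultant $f_i$ the point $q(t)$ is exactly the central projection of $p(t)$ from $O$ onto the line of $f_i$ — well defined as long as $O$ lies on no resultant's line, the analogue of the previous proposition's $f_i\nparallel f_{j,k}$. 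These sub-arc maps glue to a continuous $q(t)$; passing to infinitely many infinitesimal increments, as in the distributed-load remark, gives the pairing of (i), with $q(t)\in\overline{O\,p(t)}$ = line of the load at $p(t)$.

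For (ii): write $p(t)=O+r(t)\,u(t)$ in affine coordinates with $u(t)$ the outward radial unit vector, $r(t)=\norm{p(t)-O}$, and put $\tilde u(t):=(u(t),0)$, so that $q(t)=p(t)+\lambda(t)\,\tilde u(t)$ for a scalar $\lambda(t)$ whose value is the signed length of $\overline{p(t),q(t)}$. Just as for \eqref{eq:lambda1}, $\scal{q(t)}{f(t)}=0$ yields $\lambda(t)=-m(t)/\scal{\tilde u(t)}{f(t)}$ with $m(t)=\scal{p(t)}{f(t)}$ the bending moment; so (ii) amounts to: $\scal{\tilde u(t)}{f(t)}$ is one fixed nonzero constant for all $t\in[0,1]$. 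Because every increment is radial, its moment about $O$ vanishes, so $\scal{O}{f(t)}$ is constant along the chain; this excludes the degenerate case (all resultants through $O$, where the chain curve collapses to $O$) and, through $m(t)=\scal{O}{f(t)}+r(t)\scal{\tilde u(t)}{f(t)}$, recasts the required constancy as: $m(t)-\scal{O}{f(t)}$ is proportional to $r(t)$.

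The real content is this constancy of $\scal{\tilde u(t)}{f(t)}$, the circumferential component of the internal resultant. Differentiating, the increment contributes nothing — its force vector points along $u(t)$, and $\scal{(u(t),0)}{\cdot}$ returns the cross product of $u(t)$ with the force vector — leaving $\tfrac{d}{dt}\scal{\tilde u(t)}{f(t)}=\varphi'(t)\,F_{\mathrm{rad}}(t)$, where $\varphi(t)$ is the angular position of $p(t)$ about $O$ and $F_{\mathrm{rad}}(t)$ the radial component of $f(t)$; hence it suffices to show $F_{\mathrm{rad}}\equiv 0$. This is where the detailed hypothesis enters: ``uniformly distributed along the circumference of a circle with the corresponding radii'' says the radial load is uniform per unit angle about $O$, so the companion identity $\tfrac{d}{d\varphi}F_{\mathrm{rad}}=-\scal{\tilde u}{f}+\mu$ ($\mu$ the constant load per unit angle), paired with $\tfrac{d}{d\varphi}\scal{\tilde u}{f}=F_{\mathrm{rad}}$, forces $F_{\mathrm{rad}}''+F_{\mathrm{rad}}=0$; and ``supported only on its ends'' then removes the two free constants $a\cos\varphi+b\sin\varphi$, the endpoint support conditions forcing $F_{\mathrm{rad}}$ to vanish at both ends since the reactions are the only forces unaccounted for and act solely there. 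I expect exactly this last step — fixing $F_{\mathrm{rad}}\equiv 0$ from the load profile together with the endpoint support data, with orientation and sign conventions carefully tracked — to be the main obstacle; part (i) and the reduction in (ii) are routine once central projection takes the place of parallel projection.
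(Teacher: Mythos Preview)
Your part (i) and the opening reduction of part (ii) match the paper exactly: the paper also projects centrally from the common centre (its $o_i$ all coincide with $O$ in the radial case) and reduces the proportionality claim to a single constancy condition. In the paper's phrasing the condition is $\overrightarrow{q(t),O}=\text{const}$ (the thrust line is a circle about $O$); in yours it is $\scal{\tilde u(t)}{f(t)}=\text{const}$. These are the same statement, since with $M_O:=\scal{O}{f(t)}$ the conserved moment about $O$ one has $\overrightarrow{q(t),O}=M_O/\scal{\tilde u(t)}{f(t)}$. The paper then stops: it writes the ratio $\tfrac{m(t_1)}{m(t_2)}=\tfrac{\overrightarrow{q_1p_1}}{\overrightarrow{q_2p_2}}\cdot\tfrac{\overrightarrow{q_2,O}}{\overrightarrow{q_1,O}}$ and simply asserts that the second factor equals $1$ because for a uniformly distributed radial load the thrust curve ``is a circle''. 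Your polar ODE derivation of $F_{\mathrm{rad}}''+F_{\mathrm{rad}}=0$ is correct and is strictly more than the paper supplies.

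The step you flag is indeed a real gap, and your proposed way of closing it does not work: ``supported only on its ends'' does not force the end reactions to be purely circumferential, so $F_{\mathrm{rad}}(\varphi_0)=F_{\mathrm{rad}}(\varphi_1)=0$ is not a consequence of the stated hypotheses. Concretely, $f(0^+)$ \emph{is} the first reaction, and nothing in the proposition constrains its radial component; with a generic reaction one gets $F_{\mathrm{rad}}(\varphi)=a\cos\varphi+b\sin\varphi$ with $(a,b)\neq(0,0)$, hence a non-circular thrust line and $\lambda(t)$ not proportional to $m(t)$. The paper's proof has the identical lacuna, merely compressed into the bare assertion that the thrust line is a circle; so your attempt is at least as complete as the paper's, but neither argument actually pins down the two constants without an additional hypothesis on the supports.
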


\begin{proof}
Denoting the moment $\scal{f_i}{o_i}$ with $m_{o_i}$ and the signed distances of points $p$ and $q$ with $\overrightarrow{p,q}$, we can observe that

\begin{align}
\frac{\overrightarrow{q(t_1),p(t_1)}}{\overrightarrow{q(t_1),o_i}}=\frac{m(t_1)}{m_{o_i}}
\end{align}
and 
\begin{align}
\frac{\overrightarrow{q(t_2),p(t_2)}}{\overrightarrow{q(t_2),o_i}}=\frac{m(t_2)}{m_{o_i}}.
\end{align}

This can be rearranged into 
\begin{align}
\frac{m(t_1)}{m(t_2)}=\frac{\overrightarrow{q(t_1),p(t_1)}}{\overrightarrow{q(t_2),p(t_2)}}\frac{\overrightarrow{q(t_2),o_i}}{\overrightarrow{q(t_1),o_i}},
\end{align}
meaning this construction gives a moment diagram whenever 
\begin{align}
\frac{\overrightarrow{q(t_2),o_i}}{\overrightarrow{q(t_1),o_i}}=1
\end{align}
holds. The case of parallel $f_{j,k}$ can be considered such special case, where the two "infinite distances" cancel each other out. The other case when this holds is the case when the line of action is a circle, which cannot hold in case of concentrated forces but is possible with uniformly distributed radial load.
\end{proof}

\begin{figure}[h!]
\includegraphics[width=1\columnwidth]{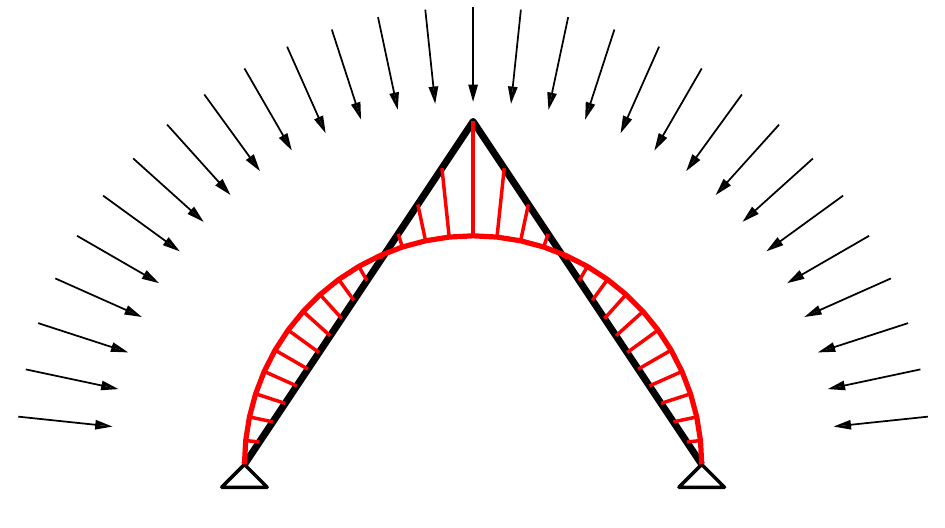}
\caption{In case of uniformly distributed radial loads the lines of action of the resultants form a circle, and the moment diagram appears between the circle and the structure.}
\label{fig:b}
\end{figure}

\section{Summary}
A clear cut description of transformations of planar force, velocity and displacement systems was presented. It was shown that for structures where the geometry uniquely (up to scaling) determines the behaviour, there is a one-to one correspondence between projective geometric transformations and dualities of planar mechanical problems, and scaling induced equivalence classes of linear transformations of the force/velocity/displacement systems. For structures outside this class, another projective tool is needed: congruences of force / velocity / displacement systems. The usefulness of this description was demonstrated in a few examples:

The projective invariance of rigidity was proven for all types of planar structures (extended from bar-joint frameworks and grillages).

A past combinatorial result regarding self stresses of grillages  and spherical polyhedra were analytically supplemented.

An exact criterion was given  to the moment diagram of a structure appearing between the reference curve of the structure and the lines of action of the resultant forces (thrust-line). This gives a possibility to graph the moment diagrams in an intuitive yet precise way.

The static dual of the Aronhold-Kennedy theorem of kinematics was given without the need to define concepts additional to the concept of the force.\\

The author hopes that the ease of this description helps educators when teaching the connection between mechanics and geometry as well as researchers when optimizing structures through graphical transformations.

\section{Acknowledgements}
The author wishes to thank O. G\'asp\'ar and Dr. P.L. V\'arkonyi for the useful discussions on thrust-lines and the presentation of this work.

\bibliographystyle{unsrt} 
\bibliography{grafobib}

\begin{thebibliography}{10}

\bibitem{ball1900treatise}
R.S. Ball.
\newblock {\em A Treatise on the Theory of Screws}.
\newblock Cambridge Mathematical Library. Cambridge University Press, 1900.

\bibitem{klein1939elementary}
F.~Klein.
\newblock {\em Elementary Mathematics from an Advanced Standpoint: Geometry}.
\newblock Dover Books on Mathematics. Dover Publications, 1939.
\newblock reprint:2004.

\bibitem{davidson2004robots}
J.K. Davidson and K.H. Hunt.
\newblock {\em Robots and Screw Theory: Applications of Kinematics and Statics
  to Robotics}.
\newblock Oxford University Press, 2004.

\bibitem{pottmann2001computational}
H.~Pottmann and J.~Wallner.
\newblock {\em Computational Line Geometry}.
\newblock Mathematics and Visualization. Springer Berlin Heidelberg, 2001.

\bibitem{crapo1982statics}
Henry Crapo and Walter Whiteley.
\newblock Statics of frameworks and motions of panel structures: a projective
  geometric introduction.
\newblock {\em Structural Topology, 1982, num. 6}, 1982.

\bibitem{whiteley1978introductionI}
Walter Whiteley.
\newblock Introduction to structural topology i: Infinitesimal motions and
  infinitesimal rigidity, 1977.

\bibitem{whiteley1978introductionII}
Walter Whiteley.
\newblock Introduction to structural topology ii: Statics and stresses, 1978.

\bibitem{rankine1857transformations}
William John~Macquorn Rankine.
\newblock Ii. on the mathematical theory of the stability of earth-work and
  masonry.
\newblock {\em Proceedings of the Royal Society of London}, 8:60--61, 1857.

\bibitem{huerta2010designing}
Santiago Huerta~Fern{\'a}ndez.
\newblock Designing by geometry. rankine's theorems of transformation of
  structures.
\newblock 2010.

\bibitem{fivet2016projective}
Corentin Fivet.
\newblock Projective transformations of structural equilibrium.
\newblock {\em International Journal of Space Structures}, 31(2-4):135--146,
  2016.

\bibitem{baranyai2018duality}
Tam{\'a}s Baranyai.
\newblock On the duality of space-trusses and plate structures of rigid plates
  and elastic edges.
\newblock {\em arXiv preprint arXiv:1805.09751}, 2018.

\bibitem{TARNAI1989duality}
T.~Tarnai.
\newblock Duality between plane trusses and grillages.
\newblock {\em International Journal of Solids and Structures}, 25(12):1395 --
  1409, 1989.

\bibitem{whiteley1991weavings}
Walter Whiteley.
\newblock Weavings, sections and projections of spherical polyhedra.
\newblock {\em Discrete Applied Mathematics}, 32(3):275--294, 1991.

\bibitem{maxwell1864}
J.~Clerk Maxwell.
\newblock On reciprocal figures and diagrams of forces.
\newblock {\em The London, Edinburgh, and Dublin Philosophical Magazine and
  Journal of Science}, 27(182):250--261, 1864.

\bibitem{maxwell1870}
J.~Clerk Maxwell.
\newblock On reciprocal figures, frames, and diagrams of forces.
\newblock {\em Transactions of the Royal Society of Edinburgh}, 26(1):1–40,
  1870.

\bibitem{shai2006study}
Offer Shai and Gordon~R Pennock.
\newblock A study of the duality between planar kinematics and statics.
\newblock {\em Journal of Mechanical Design}, 128(3):587--598, 2006.

\bibitem{baranyai2019analytical}
Tam{\'a}s Baranyai.
\newblock Analytical graphic statics.
\newblock {\em arXiv preprint arXiv:1902.08609}, 2019.

\bibitem{roller_szabo}
János Szabó and Béla Roller.
\newblock {\em Anwendung der Matrizenrechnung auf Stabwerke}.
\newblock Akadémiai Kiadó, Budapest, 1978.

\bibitem{wunderlich1982projective}
W~Wunderlich.
\newblock Projective invariance of shaky structures.
\newblock {\em Acta Mechanica}, 42(3):171--181, 1982.

\bibitem{shai2001duality}
Offer Shai.
\newblock The duality relation between mechanisms and trusses.
\newblock {\em Mechanism and Machine Theory}, 36(3):343--369, 2001.

\bibitem{shai2002utilization}
Offer Shai.
\newblock Utilization of the dualism between determinate trusses and
  mechanisms.
\newblock {\em Mechanism and Machine Theory}, 37(11):1307--1323, 2002.

\bibitem{shai2006extension}
Offer Shai and Gordon~R Pennock.
\newblock Extension of graph theory to the duality between static systems and
  mechanisms.
\newblock {\em Journal of Mechanical Design}, 128(1):179--191, 2006.

\bibitem{kinematicsbook}
Ferdinand Freudenstein and George~N. Sandor.
\newblock Kinematics of mechanisms.
\newblock In {\em Mechanical Design Handbook, Measurement, Analysis, and
  Control of Dynamic Systems}, chapter~3. McGraw-Hill Education, 2006.

\bibitem{bow2014economics}
R.H. Bow.
\newblock {\em Economics of Construction in Relation to Framed Structures}.
\newblock 1873.
\newblock Reprint: Cambridge University Press, 2014.

\bibitem{culmann1875graphische}
K.~Culmann.
\newblock {\em Die graphische Statik}.
\newblock Number 2. k. in Die graphische Statik. Meyer \& Zeller (A. Reimann),
  1875.

\bibitem{fuller1675arch}
G~Fuller.
\newblock Curve of equilibrium for a rigid arch under vertical forces.
\newblock {\em Minutes of the Proceedings of the Institution of Civil
  Engineers}, 40(1875):143--149, 1875.

\bibitem{wolfe1921graphical}
W.S. Wolfe.
\newblock {\em Graphical Analysis: A Text Book on Graphic Statics}.
\newblock McGraw-Hill book Company, Incorporated, 1921.

\end{thebibliography}

\end{document}